\documentclass{llncs}

\pagestyle{plain}

\usepackage{amsmath, amssymb}

\DeclareMathOperator{\rad}{rad}
\DeclareMathOperator{\rk}{rk}
\DeclareMathOperator{\polylog}{polylog}
\DeclareMathOperator{\chr}{char}

\newcommand{\lL}{{\mathsf{L}}}
\newcommand{\rR}{{\mathsf{R}}}

\newcommand{\sequence}[2]{{{#1}_1,\,\dotsc,\,{#1}_{#2}}}

\renewcommand{\le}{\leqslant}
\renewcommand{\ge}{\geqslant}
\newcommand{\les}{\lessapprox}

\renewcommand{\phi}{\varphi}
\newcommand{\eps}{\varepsilon}
\newcommand{\QQ}{\mathbb{Q}}
\newcommand{\RR}{\mathbb{R}}

\newcommand{\ZZ}{\mathbb{Z}}
\newcommand{\EE}{\mathbb{E}}
\newcommand{\DD}{\mathbb{D}}
\newcommand{\PP}{\mathbb{P}}
\newcommand{\GG}{\mathcal{G}}

\newcommand{\braces}[1]{\left\{{#1}\right\}}
\newcommand{\parens}[1]{\left({#1}\right)}

\newcommand{\abs}[1]{\left\lvert{#1}\right\rvert}

\begin{document}

\title{Bounds for Bilinear Complexity of Noncommutative Group Algebras}
\author{Alexey Pospelov}
\institute{Computer Science Faculty, Saarland University,\\
\email{pospelov@cs.uni-saarland.de}}
\maketitle

\abstract{We study the complexity of multiplication in noncommutative group algebras which is closely related to the complexity of matrix multiplication. We characterize such semisimple group algebras of the minimal bilinear complexity and show nontrivial lower bounds for the rest of the group algebras. These lower bounds are built on the top of Bl\"aser's results for semisimple algebras and algebras with large radical and the lower bound for arbitrary associative algebras due to Alder and Strassen. We also show subquadratic upper bounds for all group algebras turning into ``almost linear'' provided the exponent of matrix multiplication equals $2$.}

\section{Introduction}\label{sec:intro}

We study noncommutative group algebras and the problem of computing the product of two elements of an algebra. We restrict ourselves on the so-called rank or bilinear complexity of multiplication, which, roughly speaking, counts only the bilinear multiplications used by an algorithm, i.e. multiplications where each of the operands depends on one of the input vectors. A quadratic (in terms of dimension of an algebra) upper bound is straightforward, while all currently known general lower bounds are linear.

This research is motivated by the recent group-theoretic approach for matrix multiplication by Cohn and Umans~\cite{Coum} and following group-theoretic algorithms for matrix multiplication~\cite{Cksu}. It was shown that finite groups possessing some special properties can be used to design effective matrix multiplication algorithms. Our goal is to explore the structure of group algebras and investigate structural and complexity relation between noncommutative group algebras and the matrix algebra. We investigate this approach and put it into a different light. In fact, we show that the group algebras for the most promising groups for the group-theoretic approach have essentially the same complexity as the matrix multiplication itself. On the other hand, for a wide class of group algebras a lower bound holds which depends on the exponent of matrix multiplication (denoted in literature by $\omega$, see Sect.~\ref{sec:model} for definition). If one finds a more effective algorithm of multiplication in these group algebras, it would give a better upper bound for $\omega$ (but without necessary proving $\omega=2$, which is the general conjecture~\cite{Burg}). We also study general bilinear complexity of noncommutative group algebras and this paper extends the research in~\cite{Po05,Po08,Chok} where the problem for commutative group algebras over arbitrary fields was solved entirely. Our results also improve the Atkinson's upper bound for the total complexity of multiplication in group algebras~\cite{Atkn}.

Using Bl\"aser's theorem on classification of all algebras of the minimal rank (see Sect.~\ref{sec:bounds}) we formulate a criterion for a semisimple group algebra to be an algebra of the minimal bilinear complexity. For some special cases we also show a $\frac{5}{2}\cdot$dimension-lower bounds for the rank of group algebras. For other special cases we show an up to $3\cdot$dimension of an algebra lower bound. For one special class of groups having not ``too many'' different irreducible representations we show a lower bound which depends on the exponent of matrix multiplications and turns to be superlinear if the exponent of matrix multiplication does not equal to $2$. This employs Sch\"onhage's $\tau$-theorem (see Sect.~\ref{sec:bounds}). We show that this class is not empty, for instance group algebras of symmetric groups of order $n!$ and general linear groups over finite fields have such a lower bound.

Another motivation for this work was the search for algebras of high bilinear complexity. It is known, that over algebraically closed fields there exist families of algebras of arbitrarily high dimensions with bilinear complexity of each algebra from the family strictly greater than $\frac{(\text{dimension of the algebra})^2}{27}$~\cite[Exercise~17.20]{Burg}. However, no concrete examples are known. This is in some sense similar to the situation in logical synthesis theory, where it is known that the circuit complexity (in a full basis) of almost all boolean functions of $n$ variables is asymptotically $c\frac{2^n}{n}$~\cite{Lupa} where the constant $c$ depends solely on the basis, e.g. for the classical circuit basis $\{\vee,\,\&,\,\neg\}$, $c=1$.\footnote{In fact, for a full circuit basis $B=\{f_1,\,\dotsc,\,f_n\}$ where each $f_\nu$ is of $m_\nu$ variables (with no fictitious dependenies) and has weight $w_\nu$, the constant $c=\min_{\substack{1\le\nu\le n\\m_\nu\ge 2}}\frac{w_\nu}{m_\nu-1}$.} But there is no explicit construction of a function of $n$ variables with a superlinear lower bound on the number of gates in a full finite functional basis. We show that a broad class of group algebras has superlinear bilinear complexity if the exponent of matrix multiplication does not equal to $2$.

We then turn to upper bounds and show by a simple technique a general upper bound for the \emph{total} complexity of multiplication in group algebras that depends on the total complexity of matrix multiplication. In fact, if the exponent of matrix multiplication equals $2$, then the total complexity of the multiplication in group algebras is always ``almost linear''. We indicate some special cases, when this upper bound can be improved provided a maximal irreducible representation of the group has not too high dimension.

For lower bounds we distinguish between the \emph{semisimple} and the \emph{modular} case. If the characteristic of the ground field is either zero or does not divide the order of the group then the group algebra is known to be semisimple. In the other case, if the characteristic $p$  divides the order of the group, then the algebra has nontrivial radical. In some cases its structure inside the group algebra can be described exactly. But in general this introduces additional significant difficulties. If the radical has relatively small nilpotence index then it is possible to obtain relatively high lower bounds for the bilinear complexity of multiplication in group algebra.

Finally, we show direct relations between complexity of noncommutative group algebras and complexity of matrix multiplication and pose several open questions.

The paper is organized as follows: in Sect.~\ref{sec:basics} we bring all necessary definitions and notions from algebra and representation theory. In Sect.~\ref{sec:model} we introduce the model of computation we will be working with and formulate related computational problems. We discuss briefly tight relation between different algebraic notions and computational complexity. We introduce an important quantitative measure estimate for complexity of multiplication in families of algebras of growing dimensions which generalizes the well-known notion of the exponent of matrix multiplication. Classical structural results from the theory of finite-dimensional algebras and representation theory will be presented in Sect.~\ref{sec:structure}. Section~\ref{sec:bounds} contains all necessary results from the algebraic complexity theory to be employed for obtaining lower and upper bounds for the complexity of multiplication in group algebras. In Sect.~\ref{sec:lower} we prove the first part of our main result. We show, that for any ``complicated enough'' group its corresponding group algebra is not of the minimal rank. We also prove two different kinds of lower bounds for families of group algebras depending on the representations of their groups. We also show the general relation between the lower bound for the complexity of group algebra multiplication and the complexity of matrix multiplication. We show, that the bilinear complexity of multiplication in group algebras of symmetric groups is superlinear in their dimension if the exponent of matrix multiplication does not equal $2$. In Sect.~\ref{sec:upper} we turn to effective algorithms for multiplication in group algebras. We show the general upper bound for multiplication in any group algebra depending on the exponent of matrix multiplication and some improvements based on particular properties of the group.

\section{Basic Definitions}\label{sec:basics}

In what follows we always use the term \emph{algebra} for an associative algebra with unity. For example, $n\times n$-matrices over some field form an algebra, and so do univariate polynomials over some field modulo some fixed polynomial or multivariate polynomials modulo some system of polynomials.

A \emph{basis} of an algebra is any basis of the underlying vector space. The \emph{dimension} ($\dim A$) of an algebra $A$ is the dimension of the underlying vector space. The multiplication in an algebra is completely defined if it is defined for the vectors of any of its bases: let $A$ be an algebra over $k$, $n=\dim A$, and $\sequence{e}{n}$ be some basis of $A$, then
$$
e_i\cdot e_j=\sum_{\nu=1}^n\alpha_{ij}^\nu e_\nu,\; 1\le i,\,j\le n,
$$
where $\alpha_{ij}^\nu$ are the \emph{structural constants} from the field $k$. We call a basis $\braces{e_i}_{i=1}^n$ of $A$ a \emph{group basis} if the vectors $e_i$ form a multiplicative group with respect to the multiplication in algebra. In this case $A$ is called a \emph{group algebra}. On the other hand, given a finite group $G=\braces{\sequence{g}{n}}$ and a field $k$ we can define a group algebra $k[G]$ as a $n$-dimensional vector space over $k$ with basis $\braces{g_i}_{i=1}^n$ and multiplication in $k[G]$ defined as
$$
\parens{\sum_{i=1}^n \alpha_i g_i}\cdot\parens{\sum_{j=1}^n \beta_j g_j}
=\sum_{\substack{\ell=1\\ g_i g_j=g_\ell}}^n \alpha_i \beta_j g_\ell.
$$

We call the \emph{direct product} of the algebras $A$ and $B$ over one and the same field $k$ the algebra $A\times B$ over $k$ which consists of pairs of vectors $(a,\,b),\,a\in A,\,b\in B$ and all operations in $A\times B$ are performed component-wise: $(a_1,\,b_1)\circ (a_2,\,b_2)=(a_1\circ a_2,\,b_1\circ b_2)$, $\circ\in\{+,\,-,\,\cdot\}$ and $\lambda\cdot(a,\,b)=(\lambda a,\,\lambda b)$, where $a_i\in A,\,b_i\in B,\,i=1,\,2,\,\lambda\in k$.

We call $B\subseteq A$ a \emph{subalgebra} of $A$, if $B$ is a linear subspace of $A$ and the product (in $A$) of any two vectors of $B$ lies in $B$. A subalgebra $I$ of $A$ is called \emph{left \textup{(}right\textup{)} ideal} of $A$ if for all $a\in A,\;x\in I$ the product $ax\in I$ ($xa\in I$ resp.) A left ideal that is at the same time a right ideal is called a \emph{two-sided ideal}. A (left, right, two-sided) ideal is called \emph{maximal} if it is not contained in any other proper (left, right, two-sided) ideal of the algebra. An ideal $I$ is called \emph{nilpotent} if $I^m=\{0\}$ for some $m>0$.\footnote{For a set $S$ with multiplication and a positive integer $r$ $S^r$ denotes the set of all possible products of $r$ elements of $S$: $\{s_1\dotsm s_r:\:s_\rho\in S,\;1\le \rho\le r\}$.} The smallest $m$ with this property is called the \emph{nilpotence index} of $I$. The sum of all nilpotent left ideals of an algebra $A$ is called the \emph{radical} of $A$ and is denoted by $\rad A$. The intersection of all the maximal left ideals of the algebra $A$ is called the \emph{Jacobson radical} of $A$ and is denoted by $J(A)$.

\begin{proposition}\label{prp:rad}
Let $A$ be an algebra over field $k$. Then $\rad A=J(A)$.
\end{proposition}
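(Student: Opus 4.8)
The plan is to prove the two inclusions $\rad A \subseteq J(A)$ and $J(A) \subseteq \rad A$ separately, exploiting the fact that both are two-sided ideals even though they are defined via left ideals.

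First I would show $\rad A \subseteq J(A)$. Let $M$ be any maximal left ideal of $A$; I want to show every nilpotent left ideal $N$ is contained in $M$, which forces $\rad A$ (their sum) into the intersection $J(A)$. If $N \not\subseteq M$, then by maximality $M + N = A$, so in particular $1 = m + x$ for some $m \in M$ and $x \in N$. The key observation is that $x \in N$ is nilpotent as an element (since $N^r = \{0\}$ for some $r$ gives $x^r = 0$), so $m = 1 - x$ is a unit: its inverse is $1 + x + x^2 + \dotsb + x^{r-1}$. But $m \in M$ being a unit forces $M = A$, contradicting that $M$ is proper. Hence $N \subseteq M$ for every maximal left ideal $M$, so $N \subseteq J(A)$, and summing over all nilpotent left ideals gives $\rad A \subseteq J(A)$.

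For the reverse inclusion $J(A) \subseteq \rad A$, the plan is to show that $J(A)$ is itself a nilpotent left ideal, so that it appears as one of the summands defining $\rad A$. The standard route is to first establish that $J(A)$ annihilates every simple (left) $A$-module: since each maximal left ideal $M$ is the annihilator-defining object of a simple module $A/M$, and $J(A)$ is the intersection of all such $M$, one gets $J(A) \cdot S = \{0\}$ for every simple module $S$. Then, using that $A$ is finite-dimensional, one takes a composition series of $A$ viewed as a left module over itself, $A = A_0 \supset A_1 \supset \dotsb \supset A_t = \{0\}$, whose successive quotients $A_i/A_{i+1}$ are simple modules. Because $J(A)$ kills each quotient, $J(A)\cdot A_i \subseteq A_{i+1}$, and iterating gives $J(A)^t \cdot A \subseteq A_t = \{0\}$, i.e. $J(A)^t = \{0\}$. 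Thus $J(A)$ is a nilpotent left ideal, hence $J(A) \subseteq \rad A$.

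The main obstacle is the reverse inclusion, and specifically the need to invoke finite-dimensionality to guarantee a finite composition series; without it the nilpotence argument breaks down. I also need to be careful that $\rad A$, defined as the sum of nilpotent left ideals, is genuinely a nilpotent ideal itself (so that the definition is well-posed and $\rad A$ is two-sided) — this follows because a sum of two nilpotent left ideals is again nilpotent, and finite-dimensionality bounds the process, but it should be stated cleanly before the main argument. A minor technical point to handle carefully is the passage from maximal left ideals to simple modules and the verification that $J(A)$ annihilates them, which is essentially unwinding the definition of the Jacobson radical as $\bigcap_M M = \bigcap_S \operatorname{Ann}(S)$.
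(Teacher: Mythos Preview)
Your argument is correct. Both inclusions are handled cleanly: the $\rad A \subseteq J(A)$ direction via the ``$1-x$ is a unit'' trick is standard and needs no finiteness hypothesis, and for $J(A) \subseteq \rad A$ your use of a composition series of $A$ as a left module over itself (available by finite-dimensionality) to force $J(A)^t = 0$ is exactly the right idea. The side remark about $\rad A$ itself being nilpotent is not actually needed for either inclusion as you have set them up, so you can drop it or relegate it to a parenthetical.

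The paper takes a different and much shorter route: it does not argue either inclusion directly, but instead invokes the textbook fact (cited from van der Waerden) that the descending chain condition on left ideals forces $\rad A = J(A)$, and then observes that a finite-dimensional algebra satisfies this condition because any family of left ideals contains one of minimal dimension. So the paper outsources the real work to a reference, while you reproduce that work explicitly. What your approach buys is self-containment and transparency about where finite-dimensionality enters (the existence of a finite composition series); what the paper's approach buys is brevity, at the cost of a black-box citation. Structurally the two are close cousins: your composition-series argument is essentially the standard proof of the cited DCC result, specialised to the finite-dimensional setting.
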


\begin{proof}
This follows from the fact, that the \emph{descending chain condition} for left ideals in $A$ implies $\rad A=J(A)$, see~\cite{Ward}. It ensures that any family of left ideals in $A$ contains at least one minimal ideal, i.e. an ideal that does not contain any other ideal of the family. In a finite-dimensional algebra this always holds since we can map any family of ideals to the subset of integers in $[0,\,\dim A]$ mapping each ideal to its dimension as a linear subspace. The resulting image will contain the minimal element which will correspond to the set of ideals from the family having the minimal dimension. Obviously, any of these is minimal.
\qed
\end{proof}

The nilpotence index of $\rad A$ will be denoted by $N(A)$. The set of all $x\in\rad A$ such that $x\cdot\rad A=\{0\}$ is called the \emph{left annihilator} of $\rad A$ and is denoted by $\lL_A$. The \emph{right annihilator} $\rR_A$ is introduced in the similar manner.

Algebra $A$ is called a \emph{division algebra} if every element of $A$ has an inverse in $A$ with respect to the multiplication in $A$. $A$ is called \emph{local} if $A/\rad A$ is a division algebra, and $A$ is called \emph{basic} if $A/\rad A$ is a direct product of division algebras. Following Bl\"aser~\cite{Bl04} we call $A$ superbasic if ${A/\rad A\cong k^t}$ for some $t\ge 1$.

Algebra $A$ is called \emph{semisimple} if $\rad A=0$ and \emph{simple} if it does not contain any proper twosided ideals except for the $\{0\}$. Structure of semisimple and simple algebras is described in Wedderburn's theorem which can be found in~\cite{Ward}.

\begin{theorem}
Every finite dimensional semisimple algebra over some field $k$ is isomorphic to a finite direct product of simple algebras. Every finite dimensional simple $k$-algebra $A$ is isomorphic to an algebra $D^{n\times n}$ for an integer $n\ge 1$ and a $k$-division algebra $D$. The integer $n$ and the algebra $D$ are uniquely determined by $A$ \textup(the latter up to isomorphism\textup).
\end{theorem}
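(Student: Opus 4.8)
The plan is to reduce everything to the regular representation and then apply Schur's lemma together with an endomorphism-ring computation. The crux is the bridge between the hypothesis $\rad A=0$ and complete reducibility: in a finite-dimensional algebra the radical is precisely the obstruction to the left regular module ${}_A A$ being semisimple, so that $\rad A=0$ makes ${}_A A$ a direct sum of minimal left ideals. Concretely, one peels minimal left ideals off one at a time, using the vanishing of the radical to split off complements, and the descending chain condition already exploited in Proposition~\ref{prp:rad} guarantees that the process terminates. I expect this step to carry the real content, everything downstream being comparatively formal.

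Granting complete reducibility, I would write ${}_A A=\bigoplus_i L_i$ and collect the summands by isomorphism type, letting $B_i$ be the sum of all minimal left ideals lying in a fixed isomorphism class. Right multiplication carries a minimal left ideal to an isomorphic one, so each $B_i$ is a two-sided ideal; distinct components annihilate one another, whence $A=\prod_i B_i$ as an algebra and each $B_i$ is itself a semisimple algebra (with identity the corresponding component of $1$) admitting a single simple module up to isomorphism. Simplicity of the factors will drop out of the next step.

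It then remains to analyze a factor $A$ that is semisimple with a single simple module $L$ up to isomorphism, so that ${}_A A\cong L^{\oplus n}$ with $n=\dim_k A/\dim_k L$. By Schur's lemma $D:=\operatorname{End}_A(L)$ is a $k$-division algebra. Since any endomorphism of the regular module ${}_A A$ is right multiplication by a fixed element, $A\cong\operatorname{End}_A({}_A A)^{\mathrm{op}}$; combining this with $\operatorname{End}_A(L^{\oplus n})\cong D^{n\times n}$ gives $A\cong(D^{n\times n})^{\mathrm{op}}\cong(D^{\mathrm{op}})^{n\times n}$. As $D^{\mathrm{op}}$ is again a division algebra and matrix algebras over division algebras are simple, this simultaneously yields the asserted matrix form and the simplicity of each factor $B_i$. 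For uniqueness, $n$ is recovered as the composition length of ${}_A A$ and $D^{\mathrm{op}}$ as the endomorphism ring of the unique simple $A$-module, so both are invariants of $A$ determined up to isomorphism.
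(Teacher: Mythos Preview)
Your sketch is a correct outline of the standard Artin--Wedderburn argument: complete reducibility of ${}_A A$ from $J(A)=0$, isotypic decomposition into two-sided ideals, and the identification $A\cong\operatorname{End}_A({}_A A)^{\mathrm{op}}\cong (D^{\mathrm{op}})^{n\times n}$ via Schur's lemma, with uniqueness read off from the composition length and the endomorphism ring of the unique simple module. There is nothing in the paper to compare against, however: the theorem is stated as Wedderburn's classical result and simply referenced to van der Waerden~\cite{Ward} without an in-text proof, and the argument you give is precisely the one found in such standard sources.
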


\section{Computational Model}\label{sec:model}


Let $U,\,V$, and $W$ be finite dimensional vector spaces over a field $k$. Let ${\phi:\:U\times V\rightarrow W}$ be a bilinear map. A \emph{bilinear algorithm} for $\phi$ is a sequence
$$
(u_1,\,v_1,\,w_1;\;\dotsc;\;u_r,\,v_r,\,w_r)
$$
where $u_\rho\in U^\ast,\,v_\rho\in V^\ast,\,w_\rho\in W$ such that for all $x\in U,\,y\in V$
$$
\phi(x,\,y)=\sum_{\rho=1}^r u_\rho(x)v_\rho(y)w_\rho.
$$
$r$ is called the \emph{length} of the bilinear algorithm and the minimal length over all bilinear algorithms for $\phi$ is called the \emph{rank} or the \emph{bilinear complexity} of $\phi$ and is denoted by $\rk\phi$.

A sequence
$$
(u_1,\,v_1,\,w_1,\;\dotsc,\;u_\ell,\,v_\ell,\,w_\ell)
$$
where $u_\lambda,\,v_\lambda\in(U\times V)^\ast,\,w_\lambda\in W$ such that for all $x\in U,\,y\in V$
$$
\phi(x,\,y)=\sum_{\lambda=1}^\ell u_\lambda(x,\,y)v_\lambda(x,\,y)w_\lambda
$$
is called a \emph{quadratic} algorithm for $\phi$. $\ell$ is called the \emph{length} of the quadratic algorithm and the minimal length over all quadratic algorithms for $\phi$ is called the \emph{multiplicative complexity} of $\phi$ and is denoted by $C(\phi)$. Obviously $C(\phi)\le\rk\phi$. A straightforward argument implies also that ${\rk\phi\le 2C(\phi)}$ and except for trivial cases, $\rk\phi<2C(\phi)$~\cite{Jaja}.


Multiplication in algebra $A$ is a bilinear map. Rank and multiplicative complexity of multiplication in $A$ are called \emph{rank} and \emph{multiplicative complexity} of $A$ and are denoted by $\rk A$ and $C(A)$ respectively.

Obviously, $\rk A\times B\le\rk A+\rk B$ (also $C(A\times B)\le C(A)+C(B)$). However, it is not known if the converse also holds which is known as the famous Strassen's Direct Sum Conjecture~\cite[p.~360]{Burg}.

Obviously, rank (and therefore, multiplicative complexity) of any algebra $A$ is at most $(\dim A)^2$.

Let $A=\{A_1,\,A_2,\,\dotsc\}$ be a family of algebras over a field $k$. We define $\omega_{A}$, the \emph{rank-exponent of multiplication} in $A$ as
$$
\omega_{A}=\inf\{\tau:\:\rk A_n=O((\dim A_n)^\tau)\text{ for all $n\ge 1$}\}.
$$
Obviously, $0\le\omega_{A}\le 2$. Note that this definition makes only sense if $A$ contains algebras of arbitrarily big dimensions. In this case $\omega_{A} \ge 1$ since multiplication in algebra is always faithful. This notion is very similar to the well-known \emph{exponent of matrix multiplication} which will be denoted just by $\omega$ when the ground field will be clear. The only technical difference is that the exponent of matrix multiplication is defined relative to the square root of the respective algebra dimension. In fact, it can be easily seen that the regular exponent of matrix multiplication equals double the rank-exponent of matrix multiplication.

We acknowledge that the introduced rank-exponent provides quite a crude estimate, since it even does not indicate the growth order of the bilinear complexity as a function of algebra dimension. For example, if $\rk A_n=O(\dim A_n)$, then $\omega_{A}=1$, but the opposite statement must not hold: if $\omega_A=1$ then the rank may potentially be superlinear, e.g. $(\dim A_n)\cdot\polylog(\dim A_n)$. On the other hand, there are no known general upper bounds that are tight enough for the rank-exponent to be too rough. One of the most famous open problems in computational linear algebra and algebraic complexity theory is matrix multiplication, for which its exponent (and the rank exponent) is only known to be within $2\le\omega\le 2.376$~\cite{Cowi}.

\section{Structure of Group Algebras}\label{sec:structure}

Here we introduce some basic concepts from the representation theory. For the extensive treatment we refer to~\cite{Wien}.

Let $G$ be a finite group and $k$ be a field. Then $k[G]$ is semisimple if and only if $\chr k\nmid\sharp G$.

Let $G$ be a finite group and $k$ be an algebraically closed field either of characteristic $0$ or $p\nmid\sharp G$. Then $k[G]$ decomposes into a direct product of matrix algebras:
\begin{equation}\label{eq:dec1}
k[G]\cong k^{n_1\times n_1} \times\dotsm\times k^{n_t\times n_t},
\end{equation}
where each matrix algebra is called \emph{irreducible representation} of $G$ over $k$, and
$$
\sum_{\tau=1}^t n_\tau^2 = \sharp G.
$$
The numbers $n_1,\,\dotsc,\,n_t$ are called the \emph{character degrees} of $G$ in $k$.

If $k$ is not algebraically closed but again of characteristic either $0$ or $p\nmid\sharp G$, then
\begin{equation}\label{eq:dec2}
k[G]\cong D_1^{n_1\times n_1} \times\dotsm\times D_t^{n_t\times n_t},
\end{equation}
where $D_\tau$ are all division algebras over $k$ of dimensions $d_\tau$ for $1\le\tau\le t$ and
$$
\sum_{\tau=1}^t n_\tau^2 d_\tau = \sharp G.
$$

Let $k$ be a field of characteristic $p$ and let $G$ be a finite group of order $np^s$, $p\nmid n$. Suppose that a Sylow $p$-subgroup $P\subseteq G$ is normal. Then $J(k[G])$ is generated by $J(k[P])$ (under the natural inclusion $k[P]\subseteq k[G]$) and
$$
\dim J(k[G])=n(p^s-1).
$$
According to the proposition~\ref{prp:rad}, $J(k[G])=\rad k[G]$ and $k[G]/\rad k[G]$ is semisimple (see~\cite{Ward}). This implies
\begin{equation}\label{eq:dec3}
k[G]/J(k[G])\cong D_1^{n_1\times n_1}\times\dotsm\times D_t^{n_t\times n_t},
\end{equation}
where $D_\tau$ again are all division algebras over $k$ of dimension $d_\tau$ for ${1\le\tau\le t}$ and
\begin{equation}\label{eq:sum3}
\sum_{\tau=1}^t n_\tau^2 d_\tau + \dim J(k[G])=\sharp G.
\end{equation}

In case when Sylow $p$-subgroups of $G$ are not normal the situation becomes more obscure. However, it is known that $J(k[G])$ contains all ideals generated by $J(k[H])$ where $H$ is any normal $p$-subgroup of $G$. In particular, this holds when $H$ is the intersection of all the $p$-Sylow subgroups of $G$.

\section{Bounds for the Rank of Associative Algebras and Complexity of Matrix Multiplication}\label{sec:bounds}

One general lower bound for the multiplicative (and therefore the bilinear) complexity of associative algebras is due to Alder and Strassen.

\begin{theorem}[\cite{Alst}]\label{thm:Alst}
Let $A$ and $B$ be associative algebras over a field $k$ and let $t(A)$ be the number of maximal twosided ideals of $A$. Then
\begin{align}
C(A\times B)&\ge 2\dim A-t(A)+C(B),\label{eq:alst}
\end{align}
\end{theorem}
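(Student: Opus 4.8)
The plan is to take an optimal quadratic algorithm for the multiplication of $A\times B$, of length $\ell=C(A\times B)$, and to isolate inside it a sub-collection of products that is ``charged'' entirely to $A$, leaving the remainder to account for $C(B)$. The fundamental tool is the substitution method: if $\sum_{\lambda=1}^\ell u_\lambda(x,y)\,v_\lambda(x,y)\,w_\lambda$ computes a bilinear map and one forces a linear condition on the inputs or projects away one coordinate of the output, one obtains a shorter quadratic computation of a restricted map, and by tracking which products survive each restriction one extracts \emph{additive} lower bounds rather than a mere maximum. Since the statement is about $C$ and not just $\rk$, the forms $u_\lambda,v_\lambda$ may depend on both arguments, so each use of the method has to be set up for this more general quadratic setting.

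First I would exploit that $A$ has a two-sided unit: because $1\cdot y=y=y\cdot 1$ for all $y\in A$, left and right multiplication by $1$ are the identity, so the multiplication tensor of $A$ is concise in all three factors. Concretely, the output vectors $w_\lambda$ appearing in any computation must span $A$, which already forces at least $\dim A$ products and, more usefully, lets me normalize: after an invertible change of coordinates on the output space I may assume that $\dim A$ distinguished products have output components forming a basis of $A$. This pins those products to $A$ and is exactly the device that will later make the $A$-part and the $B$-part disjoint.

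The core of the argument is the refinement from $\dim A$ to $2\dim A-t(A)$. Here I would pass to $A/\rad A$, which by Wedderburn's theorem splits into $t(A)$ simple blocks, each carrying a central idempotent $e_\tau$; the number $t(A)$ of maximal twosided ideals is precisely this number of blocks. Running the substitution method along a flag of subspaces adapted to this decomposition, and using the relations $e_\tau^2=e_\tau$ together with $\sum_\tau e_\tau\equiv 1\pmod{\rad A}$, I would show that each simple block demands a further batch of close to $\dim A$ products while permitting exactly one saving, namely the single shared product that realizes its idempotent. Summing these savings over the $t(A)$ blocks produces the $-t(A)$ term, and handling $\rad A$ amounts to checking that its contribution (controlled through $\lL_A$, $\rR_A$ and the nilpotence index $N(A)$) only adds to the count and never subtracts from it.

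The hardest step is this block-by-block bookkeeping: the substitution flag must be chosen so that the products committed to distinct simple blocks, to the radical, and to $B$ stay genuinely disjoint, since only then do their counts add. The additive $+C(B)$ term then falls out from a final substitution that annihilates the $A$-component of every input: the products already charged to $A$ collapse, and what survives is a valid quadratic computation of the multiplication of $B$, hence of length at least $C(B)$. Proving that the $A$-charged products and this surviving $B$-computation are disjoint — so that one may literally add $2\dim A-t(A)$ and $C(B)$ rather than take their maximum — is the crux of the whole argument, and it is exactly what the conciseness normalization of the second step is designed to guarantee.
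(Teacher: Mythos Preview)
The paper does not prove this theorem at all: it is stated with attribution to Alder and Strassen \cite{Alst} and used as a black box throughout Sect.~\ref{sec:lower}. There is therefore no ``paper's own proof'' to compare your proposal against.

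That said, your outline is a faithful sketch of the original Alder--Strassen argument: the substitution method applied to an optimal quadratic computation, conciseness from the unit to pin down $\dim A$ products, passage to the Wedderburn decomposition of $A/\rad A$ to extract a further $\dim A - t(A)$ products (one saving per simple block), and a final specialization killing the $A$-coordinates to leave a computation for $B$. The one place where your description is loose is the treatment of the radical: in the actual proof the extra $\dim A - t(A)$ products are obtained by a chain of substitutions through a maximal flag of left (or right) ideals, and the fact that $\rad A$ sits below every maximal left ideal is what makes the count work uniformly; the annihilators $\lL_A$, $\rR_A$ and the nilpotence index $N(A)$ that you mention do not enter the Alder--Strassen argument itself (they appear in Bl\"aser's later refinements, Theorems~\ref{thm:blas} and the bound~\eqref{eq:blrad}). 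If you were to write this out in full you would want to replace that paragraph with the ideal-flag argument.
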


Algebras for which the Alder-Strassen bound is tight (put $B=\{0\}$ in~\eqref{eq:alst}) are called \emph{algebras of minimal rank}. All such algebras over arbitrary fields were characterized by~Bl\"aser.

\begin{theorem}[\cite{Bl04}]\label{thm:blas}
An algebra $A$ over an arbitrary field $k$ is an algebra of minimal rank iff
\begin{equation}\label{eq:blas}
A\cong C_1\times\dotsm\times C_s\times\underbrace{k^{2\times 2}\times\dotsm\times k^{2\times 2}}_{u\text{ times}}\times B,
\end{equation}
where $C_1,\,\dotsc,\,C_s$ are local algebras of minimal rank with $$\dim (C_\sigma/\rad C_\sigma)\ge 2,$$ i.e., $C_\sigma\cong k[X]/(p_\sigma(X)^{d_\sigma})$ for some irreducible polynomial $p_\sigma(X)$ with ${\deg p_\sigma\ge 2}$, ${d_\sigma\ge 1}$, and $\sharp k\ge 2\dim C_\sigma-2$ and $B$ is a superbasic algebra of minimal rank; that is, there exist $w_1,\dotsc,\,w_m\in\rad B$ with $w_i^2\neq 0$ and $w_i w_j=0$ for $i\neq j$ such that
$$
\rad B=\lL_B+Bw_1 B+\dotsb+Bw_m B=\rR_B+Bw_1 B+\dotsb+Bw_m B
$$
and $\sharp k\ge 2N(B)-2$. Any of the integers $s,\,u$, or $m$ may be zero, and the factor $B$ in \eqref{eq:blas} is optional.
\end{theorem}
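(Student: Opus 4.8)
The plan is to read ``algebra of minimal rank'' as the statement that the Alder--Strassen lower bound of Theorem~\ref{thm:Alst} is attained, that is $\rk A = 2\dim A - t(A)$, and to exploit that \emph{both} sides of this equality are additive under direct products. Indeed $\dim$ is additive, and the maximal twosided ideals of $A\times B$ are exactly the $M\times B$ with $M$ maximal in $A$ and the $A\times N$ with $N$ maximal in $B$, so $t(A\times B)=t(A)+t(B)$; together with $\rk(A\times B)\le\rk A+\rk B$ and the lower bound of Theorem~\ref{thm:Alst} this shows that a direct product of algebras of minimal rank is again of minimal rank. Consequently the ``if'' direction reduces to checking the three kinds of factors separately and then multiplying them together.

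For the individual factors I would argue as follows. The algebra $k^{2\times2}$ satisfies $\rk k^{2\times2}=7=2\cdot4-1$ and so is of minimal rank. Each $C_\sigma\cong k[X]/(p_\sigma(X)^{d_\sigma})$ is commutative and local, and the hypothesis $\sharp k\ge 2\dim C_\sigma-2$ is precisely the number of distinct evaluation points needed for an evaluation--interpolation (Chinese-remainder) algorithm to meet the bound, so these are of minimal rank by the known classification of commutative algebras of minimal rank. Finally the superbasic factor $B$ is of minimal rank: here one uses the prescribed generators $w_1,\dotsc,w_m$ together with the annihilator decomposition of $\rad B$ to build an explicit optimal algorithm, which is the constructive half of the superbasic analysis. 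Assembling these via the additivity above gives the ``if'' direction.

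The ``only if'' direction is the substantial one, and I would organise it in two stages. First, pass to the semisimple quotient: since $\rad A$ is a twosided ideal one shows that $A/\rad A$ again has minimal rank, and by Wedderburn's theorem it is a product of simple algebras, each of which must be a simple algebra of minimal rank. By the classification of simple algebras of minimal rank every such factor is forced to be either a field extension of $k$ or the matrix algebra $k^{2\times2}$; in particular all matrix formats of size $\ge3$ and all noncommutative division algebras are excluded, because for them the rank provably exceeds $2\dim-1$ (e.g.\ $\rk k^{3\times3}\ge 19>17$). Second, lift this decomposition back to $A$ itself---which is not automatic, as an algebra need not split along the simple factors of its semisimple quotient---and show that minimality forces such a splitting: the blocks over $k^{2\times2}$ carry no radical and split off as genuine $k^{2\times2}$ factors, the blocks over a field extension of degree $\ge2$ are local of the form $k[X]/(p_\sigma^{d_\sigma})$ with the asserted degree and field-size conditions, and all blocks over $k$ assemble into a single superbasic algebra $B$ whose radical must satisfy the stated structure.

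I expect two points to be the main obstacles. The first is justifying that minimal rank descends to the quotient $A/\rad A$ and to the blocks; this is not formal and rests on the rigidity of \emph{optimal} bilinear algorithms, where tightness of the Alder--Strassen bound lets one restrict an optimal algorithm to twosided ideals and quotients without losing optimality. The second, and genuinely hardest, is pinning down the radical of the superbasic factor: one must produce $w_1,\dotsc,w_m\in\rad B$ with $w_i^2\neq0$ and $w_iw_j=0$ for $i\neq j$ realising $\rad B=\lL_B+Bw_1B+\dotsb+Bw_mB=\rR_B+Bw_1B+\dotsb+Bw_mB$, and establish $\sharp k\ge 2N(B)-2$. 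Proving that any deviation from this annihilator decomposition costs at least one extra bilinear multiplication requires tracking how the nilpotent part interacts with the identity component throughout an entire optimal algorithm, rather than through any single algebraic identity, and this is where I expect the real work to lie.
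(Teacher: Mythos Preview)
The paper does not prove this theorem. Theorem~\ref{thm:blas} is quoted from Bl\"aser~\cite{Bl04} as a background result in Section~\ref{sec:bounds}, alongside the Alder--Strassen bound, Bl\"aser's $\tfrac{5}{2}$-bound, and Sch\"onhage's $\tau$-theorem; it is used as a black box (for instance in the proof of the main lower-bound theorem in Section~\ref{sec:lower}) and no argument for it is given or even sketched here. So there is nothing in this paper to compare your proposal against.

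For what it is worth, your outline tracks the actual architecture of Bl\"aser's proof reasonably well: additivity of the Alder--Strassen bound under direct products gives the easy direction, and the hard direction proceeds by first constraining the semisimple quotient and then controlling how the radical sits over it, with the superbasic case being the crux. Your identification of the two delicate points---that minimality of rank passes to quotients and factors via rigidity of optimal bilinear computations, and that the annihilator-plus-$Bw_iB$ decomposition of $\rad B$ must be extracted from an optimal algorithm---is accurate. But none of that is carried out in the present paper, so if you intend to supply a proof you would be reproducing (a substantial portion of) \cite{Bl04} rather than filling a gap in this manuscript.
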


The next two lower bounds are due to Bl\"aser.

\begin{theorem}[\cite{Bl00}]\label{thm:bl52}
Let $A$ be a finite dimensional algebra over a field $k$, $A/\rad A\cong A_1\times\dotsm\times A_t$ with $A_\tau=D_\tau^{n_\tau\times n_\tau}$ for all $\tau$, where $D_\tau$ is a $k$-division algebra. Assume that each factor $A_\tau$ is noncommutative, that is, $n_\tau\ge 2$ or $D_\tau$ is noncommutative. Let $n=n_1+\dotsb+n_t$. Then
$$
\rk A\ge\frac{5}{2}\dim A-3n.
$$
\end{theorem}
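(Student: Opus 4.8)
The bound refines the Alder--Strassen inequality (Theorem~\ref{thm:Alst} with $B=\{0\}$, giving $\rk A\ge 2\dim A-t(A)$); the task is to extract an additional $\tfrac12\dim A$, up to the linear term $3n$, purely from the noncommutativity of every simple factor of $A/\rad A$. The plan is to use the \emph{substitution method}, the standard device for bilinear lower bounds on algebras: fix an optimal bilinear computation $a\cdot b=\sum_{\rho=1}^r u_\rho(a)v_\rho(b)w_\rho$ of length $r=\rk A$, and repeatedly restrict the inputs so that each well-chosen restriction annihilates at least one product $\rho$ while leaving a valid bilinear computation for a strictly smaller algebra.

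First I would reduce to the semisimple factors. Projecting the computation along the quotient map $A\to A/\rad A$, I would account for the radical, contributing $\tfrac52\dim\rad A$, by radical-elimination steps in the same substitution framework, in which the left and right annihilators $\lL_A,\rR_A$ control how each radical direction forces additional products. It then remains to treat $\bar A=A/\rad A\cong D_1^{n_1\times n_1}\times\dotsm\times D_t^{n_t\times n_t}$. The $t$ maximal twosided ideals correspond to the simple factors, so the Alder--Strassen term $t(\bar A)$ is refined into $\sum_\tau n_\tau=n$ by working one factor at a time; the direct-product structure is absorbed natively into the inductive bookkeeping, rather than by appealing to any (unavailable) additivity of rank over direct products.

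The heart of the argument is a per-factor recursion for a noncommutative simple algebra $D^{m\times m}$, where $d=\dim_k D$ and either $m\ge 2$ or $D$ is noncommutative. Using the matrix units $e_{ij}$ and the abundance of invertible elements in $D^{m\times m}$, I would show that one can peel off a row and column, passing from $D^{m\times m}$ to $D^{(m-1)\times(m-1)}$ at the cost of about $5md$ bilinear products: invertibility lets the substitution method remove a full block of $(2m-1)d$ dimensions, and it is precisely the noncommutativity that yields the coefficient $5$ rather than the $4$ one would get from the Alder--Strassen rate alone. Telescoping this recursion, and tracking the linear correction so that it accumulates to $3m$ independently of $d$, gives $\rk D^{m\times m}\ge\tfrac52 m^2 d-3m=\tfrac52\dim D^{m\times m}-3m$; summing over $\tau$ and adding the radical term yields $\tfrac52\dim A-3n$.

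The main obstacle will be this recursion step. Making the substitution rigorous requires choosing, at each stage, left forms $u_\rho$ (resp.\ right forms $v_\rho$) of maximal rank relative to the $D$-module structure, so that the restriction genuinely annihilates products and genuinely lowers $m$; and carrying the $d$-fold coefficients so that the linear term remains $3m$ rather than $3md$ is delicate. Equally delicate is that the computation is \emph{global} for all of $A$ at once, so I must ensure that substitutions tuned to one simple factor, or to the radical, do not silently reuse products already charged to another factor. Keeping these accounts disjoint is where the interaction between the radical, the annihilators $\lL_A,\rR_A$, and the several matrix factors must be controlled.
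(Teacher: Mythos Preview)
The paper does not prove Theorem~\ref{thm:bl52}; it is quoted from~\cite{Bl00} in Section~\ref{sec:bounds} as a background result, alongside the Alder--Strassen bound, Bl\"aser's classification of minimal-rank algebras, the radical bound~\eqref{eq:blrad}, and Sch\"onhage's $\tau$-theorem, all stated without proof and then applied in Sections~\ref{sec:lower} and~\ref{sec:upper}. Consequently there is no ``paper's own proof'' to compare your proposal against.

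Your sketch is a plausible outline of the substitution-method argument that actually underlies~\cite{Bl00}: reduce modulo the radical, then peel rows and columns off each noncommutative simple factor $D_\tau^{n_\tau\times n_\tau}$, using invertible elements to kill enough bilinear products per step. Two cautions if you intend to flesh it out. First, the radical contribution in~\cite{Bl00} is \emph{not} a flat $\tfrac{5}{2}\dim\rad A$ obtained by ``radical-elimination steps''; the sharp form of the theorem as stated here already presumes that the $\tfrac{5}{2}$ coefficient applies to all of $\dim A$, and achieving this for the radical part requires a separate argument tied to the nilpotent filtration, not just the annihilators $\lL_A,\rR_A$. Second, the per-factor recursion you describe must be run inside a single global computation for $A$, not for each $D_\tau^{n_\tau\times n_\tau}$ in isolation; otherwise you are implicitly assuming additivity of rank over direct products, which you correctly flag as unavailable. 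Bl\"aser's proof handles both issues by a careful sequencing of substitutions that never double-counts products, and the bookkeeping is the real content---your last paragraph identifies exactly where the work lies.
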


We will show later how this can be combined with Theorem~\ref{thm:Alst} for group algebras to obtain high lower bounds in cases when some $A_\tau$ are commutative. The next theorem gives a particularly good lower bound for algebras with big radical and small nilpotence index.

\begin{theorem}[\cite{Bl00}]
Let $k$ be a field and $A$ be a finite dimensional $k$-algebra. For all $m,\,n\ge 1$, the rank of $A$ is bounded by
\begin{multline}\label{eq:blrad}
\rk A\ge\dim A-\dim((\rad A)^{n+m-1})\\
+\dim((\rad A)^m)+\dim((\rad A)^n).
\end{multline}
\end{theorem}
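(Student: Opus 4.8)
Throughout set $R:=\rad A$, recall that each power $R^j$ is a twosided ideal with $R^aR^b\subseteq R^{a+b}$, and abbreviate $\bar A:=A/R^{m+n-1}$, so that $\dim\bar A=\dim A-\dim R^{m+n-1}$. Fix an optimal bilinear algorithm $xy=\sum_{\rho=1}^r u_\rho(x)v_\rho(y)w_\rho$ with $r=\rk A$. The plan is to run the substitution (restriction) method in both input variables: I would choose subspaces $U,V\subseteq A$ of codimensions $\dim R^m$ and $\dim R^n$ so that restricting $x$ to $U$ and $y$ to $V$ annihilates $\dim R^m+\dim R^n$ of the products, and then show that the surviving restricted map is still complicated enough to force the remaining terms, namely $\rk(\phi|_{U\times V})\ge\dim\bar A$. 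Combining the two estimates gives $r\ge(\dim R^m+\dim R^n)+\dim\bar A$, which is the assertion.

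For the first ingredient I would use the elementary fact that restricting the first argument to the common kernel of $c$ linearly independent forms among the $u_\rho$ makes those $c$ products vanish identically, leaving an algorithm of length at most $r-c$ for the restriction; the same applies to the $v_\rho$ in the second argument. Thus if $U$ is cut out by $\dim R^m$ independent forms $u_\rho$ and $V$ by $\dim R^n$ forms $v_\sigma$ (independent among the terms that survive the first cut), then $\rk(\phi|_{U\times V})\le r-\dim R^m-\dim R^n$. Such independent forms exist because, writing $x=x\cdot1=\sum_\rho u_\rho(x)\,v_\rho(1)w_\rho$ and restricting to $x\in R^m$, the family $\{u_\rho|_{R^m}\}$ must have rank $\ge\dim R^m$ (the composite reproduces the inclusion $R^m\hookrightarrow A$); symmetrically for the $v_\rho$ via $1\cdot y$.

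For the second ingredient I would prove $\rk(\phi|_{U\times V})\ge\dim\bar A$ by exhibiting $\phi|_{U\times V}$ as a surjection onto $\bar A$ after composition with the projection $\pi:A\to\bar A$: once $U\cdot V+R^{m+n-1}=A$, the vectors $\pi(w_\rho)$ occurring in the restricted algorithm must span $\bar A$, forcing at least $\dim\bar A$ terms. Everything therefore hinges on selecting one and the same pair $U,V$—of codimensions exactly $\dim R^m$ and $\dim R^n$—for which the prescribed number of products vanishes and yet $U\cdot V+R^{m+n-1}=A$ still holds. I expect this coordination to be the main obstacle: the cuts must be chosen along the radical directions (keeping the unit inside $U$ and $V$) so that a left cut of depth $m$ and a right cut of depth $n$ destroy reachable products only at depth at least $m+n-1$, which is precisely what pins the exponent in $R^{m+n-1}$ and what must be reconciled with the independence requirement on the removed forms.

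As a consistency check, for the local algebra $A=k[X]/(X^N)$ over an infinite field the optimal choice of $m,n$ gives $\rk A\ge 2N-1$, meeting the Alder--Strassen bound $2\dim A-t(A)$; note also that here $\rk(\phi|_{U\times V})$ can exceed $\dim\bar A$ and even the dimension of its target, so a mere flattening of the restricted map would be too weak and the surjectivity argument is essential.
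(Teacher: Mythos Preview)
The paper does not contain a proof of this statement: the theorem is quoted from~\cite{Bl00} and invoked as a black box, with no proof environment following it before the next cited result. There is therefore no in-paper argument against which to compare your sketch.

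That said, your plan follows the substitution strategy that underlies Bl\"aser's original proof, and the obstacle you flag is real and is exactly where the content lies. Your argument that $\dim R^m$ of the $u_\rho$ are independent on $R^m$ (via $x\cdot 1=x$) is fine, and the second cut can be salvaged even if $1\notin U$: since $U$ is a complement to $R^m$, it contains an element $u_0=1-r_0$ with $r_0\in R^m$, which is a unit, and left multiplication by $u_0$ restricts to an isomorphism $R^n\to R^n$; this forces the surviving $v_\sigma$ to have rank $\ge\dim R^n$ on $R^n$. What is genuinely missing is the surjectivity step $UV+R^{m+n-1}=A$ for \emph{arbitrary} linear complements $U$ of $R^m$ and $V$ of $R^n$. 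Your heuristic (``left depth $m$ plus right depth $n$ kills only depth $\ge m+n-1$'') is suggestive but is not a proof, and the straightforward expansion $A=(U+R^m)(V+R^n)$ only yields $UV+R^{\min(m,n)}=A$; pushing the exponent up to $m+n-1$ requires an additional inductive or filtration argument that you have not supplied. This is precisely the technical core of~\cite{Bl00}, so your proposal correctly locates the difficulty without yet resolving it.
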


The following fact is a simplified version of Sch\"onhage's $\tau$-theorem.

\begin{theorem}[\cite{Schn}]
Let
$$
A=k^{n_1\times n_1}\times\dotsm\times k^{n_t\times n_t},
$$
where $n_\tau>1$ for at least one $\tau$ and $\rk A\le r$. Let $\omega_0$ be a root of the equation
$$
n_1^{x}+\dotsb+n_t^{x}=r.
$$
Then the exponent of matrix multiplication over $k$ does not exceed $\omega_0$.
\end{theorem}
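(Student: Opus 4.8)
The plan is to reduce the statement to Schönhage's asymptotic sum inequality for matrix multiplication tensors and then read off the bound on $\omega$ by monotonicity. First I would observe that the bilinear complexity of an algebra is the rank of its structural tensor, that the structural tensor of $k^{n\times n}$ is the matrix multiplication tensor $\langle n,n,n\rangle$, and that the structural tensor of a direct product of algebras is the direct sum of the structural tensors. Hence the hypothesis $\rk A\le r$ reads
\[
\rk\Bigl(\langle n_1,n_1,n_1\rangle\oplus\dotsb\oplus\langle n_t,n_t,n_t\rangle\Bigr)\le r,
\]
and everything reduces to extracting information about $\omega$ from the cheap simultaneous computation of several independent square matrix products.

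The core is the asymptotic sum inequality, namely that this bound forces $\sum_{\tau=1}^t n_\tau^\omega\le r$. I would prove it by tensoring the direct sum with itself $N$ times. The $N$-th tensor power of a direct sum expands multinomially into a direct sum of tensor products of the summands, and a tensor product of matrix multiplication tensors is again a square matrix multiplication tensor; thus
\[
\Bigl(\bigoplus_\tau\langle n_\tau,n_\tau,n_\tau\rangle\Bigr)^{\otimes N}\cong\bigoplus_{|\mathbf e|=N}\binom{N}{\mathbf e}\,\langle D_{\mathbf e},D_{\mathbf e},D_{\mathbf e}\rangle,\qquad D_{\mathbf e}=\prod_\tau n_\tau^{e_\tau},
\]
and its rank is at most $r^N$ by submultiplicativity of rank under $\otimes$. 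Projecting onto a single multi-index $\mathbf e$ leaves $\binom{N}{\mathbf e}$ disjoint copies of one matrix multiplication $\langle D_{\mathbf e},D_{\mathbf e},D_{\mathbf e}\rangle$, still of rank at most $r^N$. These disjoint copies must now be recombined into a single large square matrix multiplication whose size bounds $\omega$; optimizing the choice of $\mathbf e$ — letting $\mathbf e/N$ approach the distribution proportional to $n_\tau^3$, so that the multinomial coefficient contributes exactly the right entropy factor — is what produces the exponent $\omega$ against each $n_\tau$ and yields $\sum_\tau n_\tau^\omega\le r$.

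The hard part is precisely this recombination step. A naive restriction that merely collapses the copies into one rectangular product and then symmetrizes to a square product loses an entropy factor and only delivers the weaker bound $(\sum_\tau n_\tau^3)^{\omega/3}\le r$. To obtain the sharp inequality one needs Schönhage's key device: disjoint matrix multiplications combine, in the limit, with their volumes adding under the exponent $\omega/3$, which is a genuinely border-rank phenomenon reflecting the failure of the direct-sum conjecture for border rank. Since border rank never exceeds rank, the bound so obtained for $\underline{\rk}$ transfers to our rank hypothesis. This degeneration-based combination is the technical heart, and the step I expect to be the main obstacle.

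Finally I would convert $\sum_\tau n_\tau^\omega\le r$ into the stated conclusion. Because at least one $n_\tau>1$, the function $x\mapsto\sum_\tau n_\tau^x$ is strictly increasing, so the equation $\sum_\tau n_\tau^x=r$ has a unique root $\omega_0$, and $\sum_\tau n_\tau^\omega\le r=\sum_\tau n_\tau^{\omega_0}$ immediately gives $\omega\le\omega_0$.
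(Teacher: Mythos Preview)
The paper does not prove this theorem: it is stated with a citation to Sch\"onhage and then used as a black box. There is no argument in the paper to compare yours against.

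For what it is worth, your outline is the standard route to the asymptotic sum inequality and is structurally sound: translate the rank hypothesis into a bound on a direct sum of square matrix-multiplication tensors, take $N$-th tensor powers, expand multinomially, invoke Sch\"onhage's combination of disjoint copies (a genuine border-rank step) to extract $\sum_\tau n_\tau^{\omega}\le r$, and finish by monotonicity. One detail to revisit when you fill this in: with the combination device in its strong form ``$M$ disjoint copies of $\langle n,n,n\rangle$ of border rank at most $R$ force $M\,n^{\omega}\le R$'', the multi-index maximising $\binom{N}{\mathbf e}^{1/N}D_{\mathbf e}^{\,\omega/N}$ has $\mathbf e/N$ proportional to $n_\tau^{\omega}$, not to $n_\tau^{3}$. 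The distribution proportional to $n_\tau^{3}$ is exactly what arises from the \emph{weak} recombination $M\cdot\langle n,n,n\rangle\rightsquigarrow\langle M^{1/3}n,\,M^{1/3}n,\,M^{1/3}n\rangle$, and with that choice one only recovers the bound $\bigl(\sum_\tau n_\tau^{3}\bigr)^{\omega/3}\le r$ that you yourself flag as insufficient. This does not break the plan, but the two recombination steps and their corresponding optimal distributions should be kept straight.
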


\section{Lower Bounds}\label{sec:lower}

Let $\GG=\{G_1,\,G_2,\dotsc\}$ be a family of finite groups of unbounded orders and let $k$ be a field. We will distinguish between two different cases:
\begin{enumerate}
\item\label{item:semisimple} $\chr k=0$ or $\chr k=p$ and for any $i\ge 1$ $p\nmid\sharp G_i$ and
\item\label{item:modular} $\chr k=p$ and for some $i\ge 1$ $p\mid\sharp G_i$.
\end{enumerate}

We will call $\GG$ in the first case a \emph{semisimple} family of groups and in the second a \emph{modular} family of groups. We will start with the semisimple case.

\subsection{Semisimple Case}\label{subsec:lower_semisimple}

We will start with the case of algebraically closed $k$ since all simple algebras over $k$ are simply matrix algebras.

\begin{lemma}\label{lem:ineq}
Let $n_1,\,\dotsc,\,n_t\ge 0$ and $\delta\ge 1$. Then
\begin{equation}\label{eq:ineq2}
\sum_{\tau=1}^t n_\tau\le t^{1-\frac{1}{\delta}}\parens{\sum_{\tau=1}^t n_\tau^\delta}^{\frac{1}{\delta}}.
\end{equation}
\end{lemma}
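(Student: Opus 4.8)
The plan is to recognize inequality~\eqref{eq:ineq2} as nothing more than the power-mean inequality, equivalently a single application of H\"older's inequality, and to derive it directly. The cleanest route is to pair the vector $(n_1,\dotsc,n_t)$ against the all-ones vector $(1,\dotsc,1)$ and apply H\"older with the conjugate exponents $p=\delta$ and $q=\frac{\delta}{\delta-1}$, which satisfy $\frac1p+\frac1q=1$ whenever $\delta>1$.

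Concretely, I would first dispose of the boundary case $\delta=1$, where $t^{1-1/\delta}=t^0=1$ and the right-hand side reduces to $\sum_\tau n_\tau$, so the inequality holds with equality; the degenerate situations $t=0$ or $n_1=\dotsb=n_t=0$ are likewise immediate. For $\delta>1$ I would then write
$$
\sum_{\tau=1}^t n_\tau=\sum_{\tau=1}^t n_\tau\cdot 1
\le\parens{\sum_{\tau=1}^t n_\tau^\delta}^{\frac1\delta}\parens{\sum_{\tau=1}^t 1^{\frac{\delta}{\delta-1}}}^{\frac{\delta-1}{\delta}},
$$
where the inequality is exactly H\"older's with the stated exponents. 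The second factor is $\parens{\sum_{\tau=1}^t 1}^{\frac{\delta-1}{\delta}}=t^{\frac{\delta-1}{\delta}}=t^{1-\frac1\delta}$, and substituting this in reproduces~\eqref{eq:ineq2} verbatim.

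If one prefers to avoid quoting H\"older, an equally short self-contained argument runs through convexity. Since $\delta\ge1$, the map $x\mapsto x^\delta$ is convex on $[0,\infty)$, so Jensen's inequality applied to the uniform average of $n_1,\dotsc,n_t$ gives $\parens{\frac1t\sum_\tau n_\tau}^\delta\le\frac1t\sum_\tau n_\tau^\delta$. Taking $\delta$-th roots and multiplying through by $t$ yields $\sum_\tau n_\tau\le t\cdot t^{-1/\delta}\parens{\sum_\tau n_\tau^\delta}^{1/\delta}=t^{1-1/\delta}\parens{\sum_\tau n_\tau^\delta}^{1/\delta}$, the desired bound. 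I do not anticipate any genuine obstacle here: the only point requiring care is the exponent bookkeeping, namely checking that the conjugate exponent of $\delta$ contributes precisely the factor $t^{1-1/\delta}$, and that the nonnegativity hypothesis $n_\tau\ge0$ is what legitimizes dropping absolute values and applying convexity on $[0,\infty)$.
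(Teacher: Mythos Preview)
Your proof is correct and is essentially identical to the paper's own argument: the paper also applies H\"older's inequality with $x_\tau=n_\tau$, $y_\tau=1$, $a=\delta$, and $\frac{1}{b}=1-\frac{1}{\delta}$, obtaining~\eqref{eq:ineq2} immediately. Your additional handling of the boundary case $\delta=1$ and the alternative convexity derivation are fine but go slightly beyond what the paper records.
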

\begin{proof}
Let $x_1,\,\dotsc,\,x_t,\,y_1,\,\dotsc,\,y_t$ be complex numbers and $a,\,b\ge 1$ be such that ${\frac{1}{a}+\frac{1}{b}=1}$. Then, by H\"older's inequality
$$
\sum_{\tau=1}^t \abs{x_\tau} \abs{y_\tau}\le\parens{\sum_{\tau=1}^t \abs{x_\tau}^a}^\frac{1}{a}\parens{\sum_{\tau=1}^t \abs{y_\tau}^b}^{\frac{1}{b}}.
$$
Choosing $x_\tau=n_\tau$ and $y_\tau=1$ for all $\tau$, $a=\delta$, and $\frac{1}{b}=1-\frac{1}{\delta}$ completes the proof.
\qed
\end{proof}

Let $G$ be a finite group and $k$ be a field. We introduce following notation: let $t_i(G)$ be the number of irreducible character degrees of $G$ over $k$ equal to $i$. Let $T_i(G)=\sum_{j=i}^\infty t_j(G)$ be the number of irreducible character degrees of $G$ over $k$ not less than $i$. Obviously,
\begin{align*}
T_i(G)&\ge T_j(G),\text{ if }i<j;\\
t_i(G)&=T_i(G)-T_{i+1}(G);\\
\sharp G&=\sum_{i=1}^\infty i^2 t_i(G);\\
t_i(G)&=0,\text{ if }i\ge\sqrt{\sharp G-1}.
\end{align*}

The last follows from the fact, that every group has at least two different irreducible representations. Note, that the number of maximal twosided ideals of $k[G]$ is exactly $T_1(G)=t$, where $t$ is the number of multiplicands in~\eqref{eq:dec1}.

\begin{theorem}
Let $G$ be a finite group and $k$ be an algebraically closed field of characteristic either $0$ or $p\nmid\sharp G$. Let $t$ be as in~\eqref{eq:dec1}.
\begin{enumerate}
\item If $T_3(G)=0$ then $k[G]$ is of minimal rank and
$$
\rk k[G]=2\sharp G-t=t_1(G)+7t_2(G).
$$
\item If $T_3(G)>0$ then $k[G]$ is not of minimal rank then
$$
\rk k[G]\ge 2\sharp G-t+\max\parens{\frac{5}{2}T_7(G),\,1}.
$$
\item Let $\GG=\braces{G_1,\,G_2,\,\dotsc}$ be a family of finite groups, ${\sharp G_n<\sharp G_{n+1}}$ for all $n\ge 1$. Assume that the number of irreducible character degrees of $G\in\GG$ over $k$ is $o(\sharp G)$.\footnote{By using this notation we mean that for any constant $c>0$ there exists such $N>0$ that if $G\in\GG$ and $\sharp G>N$ then the number of irreducible character degrees of $G$ over $k$ is smaller than $c\cdot\sharp G$.} Then the following lower bound holds:
$$
\rk k[G]\ge \frac{5}{2}\sharp G-o(\sharp G).
$$
\end{enumerate}
\end{theorem}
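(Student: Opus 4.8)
The plan is to read off the structure of $k[G]$ from Wedderburn's decomposition~\eqref{eq:dec1} and to feed it into the classification and lower-bound theorems of Section~\ref{sec:bounds}. Throughout write $n_1,\dotsc,n_t$ for the character degrees, so that $\sharp G=\sum_\tau n_\tau^2$ and $t=T_1(G)$, and recall that over an algebraically closed field every simple factor is a full matrix algebra $k^{n_\tau\times n_\tau}$. For~(1) the hypothesis $T_3(G)=0$ means every $n_\tau\in\{1,2\}$, so
\[
k[G]\cong\underbrace{k^{2\times2}\times\dotsm\times k^{2\times2}}_{t_2(G)}\times k^{t_1(G)}.
\]
I would match this against Bl\"aser's form~\eqref{eq:blas}: there are no local factors $C_\sigma$ (over an algebraically closed field the only finite dimensional division algebra is $k$, so $\dim(C_\sigma/\rad C_\sigma)\ge2$ cannot be met), the $t_2(G)$ copies of $k^{2\times2}$ are exactly the $u$ admissible copies, and $k^{t_1(G)}$ is superbasic with trivial radical; the field-size conditions such as $\sharp k\ge 2N(B)-2$ hold vacuously since $k$ is infinite. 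Hence $k[G]$ is of minimal rank and the Alder--Strassen bound~\eqref{eq:alst} (with $B=\{0\}$) is tight, giving $\rk k[G]=2\sharp G-t$; the closed form follows from $\sharp G=t_1(G)+4t_2(G)$ and $t=t_1(G)+t_2(G)$.

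For~(2), non-minimality is immediate from Theorem~\ref{thm:blas}: if $T_3(G)>0$ then some simple factor is $k^{n\times n}$ with $n\ge3$, and such a factor is neither a $k^{2\times2}$, nor local, nor part of a superbasic algebra, so $k[G]$ cannot have the form~\eqref{eq:blas}; as the rank is an integer strictly above the Alder--Strassen value $2\sharp G-t$, this already yields the summand $1$. For the term $\tfrac52 T_7(G)$ I would start from the global estimate $2\sharp G-t=\sum_\tau(2n_\tau^2-1)$ and, for each factor of degree $n_\tau\ge7$, replace its contribution $2n_\tau^2-1$ by the sharper $\tfrac52 n_\tau^2-3n_\tau$ coming from Theorem~\ref{thm:bl52}. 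The per-factor gain is
\[
\parens{\tfrac52 n_\tau^2-3n_\tau}-\parens{2n_\tau^2-1}=\tfrac12 n_\tau^2-3n_\tau+1\ge\tfrac52\quad\Longleftrightarrow\quad n_\tau^2-6n_\tau-3\ge0,
\]
and the last inequality holds exactly for $n_\tau\ge 3+2\sqrt3$, i.e.\ for integers $n_\tau\ge7$, which is the origin of the index $7$. Summing over the $T_7(G)$ such factors and taking the maximum with the non-minimality bound $1$ gives the claim.

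For~(3), I would peel off the commutative part $k^{t_1(G)}$ by Alder--Strassen and apply Theorem~\ref{thm:bl52} to the noncommutative part $A_{\mathrm{nc}}=\prod_{n_\tau\ge2}k^{n_\tau\times n_\tau}$, obtaining
\[
\rk k[G]\ge\tfrac52\dim A_{\mathrm{nc}}-3\sum_{n_\tau\ge2}n_\tau+2t_1(G)-t_1(G)=\tfrac52\sharp G-\tfrac32 t_1(G)-3\sum_{n_\tau\ge2}n_\tau.
\]
It then remains to show the error term is $o(\sharp G)$. Here $t_1(G)\le t=o(\sharp G)$ by hypothesis, and Lemma~\ref{lem:ineq} with $\delta=2$ controls the linear sum of degrees:
\[
\sum_{n_\tau\ge2}n_\tau\le (T_2(G))^{1/2}\parens{\sum_{n_\tau\ge2}n_\tau^2}^{1/2}\le\sqrt{t\cdot\sharp G}=\sqrt{t/\sharp G}\cdot\sharp G=o(\sharp G),
\]
using $T_2(G)\le t$, $\sum_{n_\tau\ge2}n_\tau^2\le\sharp G$, and $t=o(\sharp G)$. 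This is exactly the role of Lemma~\ref{lem:ineq}, and it finishes the argument.

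The step I expect to be the main obstacle is the legitimacy of the componentwise/peeling combination used in~(2) and~(3): Theorem~\ref{thm:Alst} is phrased for the multiplicative complexity $C$, and a lower bound on $C$ of a summand does not feed a \emph{rank} lower bound (Theorem~\ref{thm:bl52}) back into the direct sum. One therefore has to establish the direct-sum estimate at the level of rank directly, relying on the fact that the Alder--Strassen substitution argument localizes to the individual simple components, so that the per-factor bounds --- Alder--Strassen on the small and commutative factors and the $\tfrac52$-bound on the large ones --- may be summed. Checking that this localization is simultaneously compatible with both bounds is the technical heart of the proof.
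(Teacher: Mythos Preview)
Your proposal is correct and follows essentially the same route as the paper: partition the Wedderburn factors by degree, use Theorem~\ref{thm:blas} for part~(1), peel off the small factors via Theorem~\ref{thm:Alst} and apply Theorem~\ref{thm:bl52} to the block of large factors for part~(2), and for part~(3) peel off $k^{t_1(G)}$ and control $\sum_{n_\tau\ge2}n_\tau$ by Lemma~\ref{lem:ineq} with $\delta=2$ (the paper writes $\delta=\tfrac12$, a slip). Your ``per-factor gain'' bookkeeping in~(2) is just a reformulation of the paper's three-block split $A,\,B_1,\,B_2$; it yields the same inequality and, incidentally, explains the threshold~$7$ exactly as you do. As for the obstacle you flag: the paper proceeds exactly as you propose and simply chains $\rk(A\times B)\ge 2\dim A - t(A) + \rk B$ into Theorem~\ref{thm:bl52} without comment---the rank version of the Alder--Strassen peeling inequality is standard (same substitution proof), so no additional localization argument is needed.
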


\begin{proof}
Consider the decomposition~\eqref{eq:dec1} for $k[G]$. Note, that the number $t$ is exactly the number of maximal twosided ideals of $k[G]$. Assume w.l.o.g. that $n_1\le\dotsb\le n_t$ and let $A$ be the direct product of all the matrix algebras from~\eqref{eq:dec1} of order $1$ or $2$ and let $B$ be the remaining product: $k[G]=A\times B$. Note, that
\begin{align}\label{eq:Asize}
\dim A&=t_1(G)+4t_2(G)=T_1(G)+3T_2(G)-4T_3(G),\\
\rk A&=t_1(G)+7t_2(G)=2\dim A-(t_1(G)+t_2(G)).\label{eq:Arank}
\end{align}
\eqref{eq:Arank} and the fact that $A$ is of minimal rank follow from Theorem~\ref{thm:blas}. The number of maximal twosided ideals in $A$ is $t_1(G)+t_2(G)$.
\begin{enumerate}
\item Let $k[G]=A$. Then $T_3(G)=0$, $t=t_1(G)+t_2(G)$ and theorem follows from~\eqref{eq:Arank}.
\item\label{case:lower2} Let $B$ be nonempty. By Theorem~\ref{thm:blas}, $k[G]$ is not of minimal rank, therefore $\rk k[G]\ge 2\sharp G-t+1$. By~\eqref{eq:alst} and the fact that $A$ is of minimal rank
$$
\rk k[G]=\rk A\times B=2\dim A-(T_1(G)-T_3(G))+\rk B.
$$
The lower bound follows from~\eqref{eq:alst} and the upper from the trivial inequality $\rk A\times B\le\rk A+\rk B$. Let $B=B_1\times B_2$ where $B_1$ contains all matrix algebras of~\eqref{eq:dec1} of order $\le 6$. The number of maximal twosided ideals in $B_1$ is $t_3(G)+\dotsb+t_6(G)=T_3(G)-T_7(G)$. Then, using~\eqref{eq:alst} once again
$$
\rk B\ge 2\dim B_1-(T_3(G)-T_7(G))+\rk B_2.
$$
Assume that $B_2$ is not empty. Recall, that $n_1\le\dotsb\le n_t$ and therefore $n_{t-T_7(G)+1}\ge 7$. For $B_2$ we can use Theorem~\ref{thm:bl52}:
\begin{multline*}
\rk B_2\ge\frac{5}{2}\sum_{\tau=t-T_7(G)+1}^t n_\tau^2-3\sum_{\tau=t-T_7(G)+1}^t n_\tau\\
=2\dim B_2+\sum_{\tau=t-T_7(G)+1}^t\parens{n_\tau\parens{\frac{n_\tau}{2}-3}}\ge 2\dim B_2+\frac{7}{2}T_7(G).
\end{multline*}
Gathering it all together, we get
\begin{multline*}
\rk k[G]\ge 2\dim A+2\dim B_1+2\dim B_2-T_1(G)+\frac{5}{2}T_7(G)\\
=2\sharp G-t+\frac{5}{2}T_7(G),
\end{multline*}
which proves the second statement of the theorem.
\item
Let $t=o\parens{\sharp G}$. Let $k[G]=k^{t_1(G)}\times C$, $C$ is obviously not empty, and $\dim C=n_{t-T_2(G)+1}^2+\dotsb+n_t^2$. By Alder-Strassen theorem
$$
\rk k[G]=\rk k^{t_1(G)}+\rk C\ge t_1(G)+\frac{5}{2}\dim C-3\sum_{\tau=t-T_2(G)+1}^t n_\tau.
$$
By using Lemma~\ref{lem:ineq} for dimensions of factors of $C$ and setting $\delta=\frac{1}{2}$ we obtain
$$
\sum_{\tau=t-T_2(G)+1}^t n_\tau\le\sqrt{T_2(G)\dim C}\le\sqrt{t\sharp G}=o(\sharp G).
$$
On the other hand, the number $t_1(G)$ of different irreducible representations of $G$ of dimension $1$ does not exceed $t$ and therefore is also $o(\sharp G)$, therefore, $\dim C=\sharp G-t_1(G)=\sharp G-o(\sharp G)$. Therefore, $\rk k[G]\ge\frac{5}{2}\sharp G-o(\sharp G)$ \qed
\end{enumerate}
\end{proof}

\begin{remark}
The lower bound in case~\ref{case:lower2} can be improved further by employing the lower bound due to Bl\"aser $\rk k^{n\times n}\ge 2n^2+n-2$ for $n\ge 3$~\cite{Bl03}. However, the best we can achieve by now is to employ Alder-Strassen lower bounds for all multiplicands in~\eqref{eq:dec1} except for one (of the biggest dimension) and use $2n^2+n-2$ for the last: if $n_1\le\dotsb\le n_t$ and $n_t\ge 3$ then
$$
\rk k^{n_1\times n_1}\times\dotsm\times k^{n_t\times n_t}\ge2\sharp G+n_t-t-1.
$$
\end{remark}

\begin{corollary}\label{cor:low}
Let $k$ be an algebraically closed field of characteristic $0$.
\begin{enumerate}
\item Let $S_n$ be the symmetric group of order $n!$. Then
$$
\rk k[S_n]\ge\frac{5}{2}n!-o(n!).
$$
\item Let $GL(2,\,q)$ be the general linear group of nonsingular $2\times 2$-matrices over $GF(q)$. Then
$$
\rk k[GL(2,\,q)]\ge\frac{5}{2}\sharp GL(2,\,q)-o(\sharp GL(2,\,q)).
$$
\item Let $SL(2,\,q)$ be the special linear group of $2\times 2$-matrices over $GF(q)$ with determinant $1$. Then
$$
\rk k[SL(2,\,q)]\ge\frac{5}{2}\sharp SL(2,\,q)-o(\sharp SL(2,\,q)).
$$
\item Let $p_n$ be the $n$th prime number. Let $F_{p_n,\,p_n-1}$ be a Frobenius group of order $p_n(p_n-1)$ defined by $\{a,\,b:a^{p_n}=b^{p_n-1}=1,\,b^{-1}ab=a^u\}$, where $u$ is an element of order $p_n-1$ in $\ZZ^\ast_{p_n}$~\cite{Jali}. Then
$$
\rk k[F_{p_n,\,p_n-1}]\ge\frac{5}{2}p_n^2-o(p_n^2).
$$
\item Let $p_n$ be the $n$th prime number and let $G_n$ be a non-abelian $p_n$-group with an abelian subgroup of index $p_n$. Then
$$
\rk k[G_n]\ge\frac{5}{2}\sharp G-o(\sharp G).
$$
\end{enumerate}
\end{corollary}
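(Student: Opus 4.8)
The plan is to obtain every item as an immediate application of the third part of the preceding theorem. Since $k$ is algebraically closed of characteristic $0$, the decomposition~\eqref{eq:dec1} holds and the number $t$ of irreducible representations of a finite group $G$ over $k$ coincides with its number of conjugacy classes; moreover $t$ is exactly the quantity that the theorem requires to be $o(\sharp G)$. So in each case it suffices to check a single hypothesis, namely that the number of conjugacy classes of the groups in the family is $o(\sharp G)$, after which the asserted bound $\rk k[G]\ge\frac{5}{2}\sharp G-o(\sharp G)$ follows at once.

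First I would dispose of the three classical families by direct counting. For $S_n$ the number of conjugacy classes is the number of partitions $p(n)$, which grows subexponentially while $\sharp S_n=n!$ grows superexponentially, so $p(n)=o(n!)$. For $GL(2,\,q)$ the order is $(q^2-1)(q^2-q)$, of order $q^4$, whereas a standard count gives $q^2-1$ conjugacy classes, and $q^2-1=o(q^4)$. For $SL(2,\,q)$ the order is $q(q^2-1)$, of order $q^3$, and the number of conjugacy classes is $O(q)=o(q^3)$. In all three cases the group orders grow along the family, so the third part of the theorem applies, the lower-order term $o(\sharp G)$ absorbing the discrepancy between $\sharp G$ and its leading power.

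For the Frobenius group $F_{p_n,\,p_n-1}\cong\ZZ_{p_n}\rtimes\ZZ_{p_n-1}$ I would exhibit the irreducible representations explicitly. The abelian quotient by the kernel, $F/\ZZ_{p_n}\cong\ZZ_{p_n-1}$, yields $p_n-1$ linear characters, and since the complement permutes the $p_n-1$ nontrivial characters of the kernel in a single orbit, the remaining irreducibles amount to one character of degree $p_n-1$. Thus $t=(p_n-1)+1=p_n$, which one checks against $\sum n_\tau^2=(p_n-1)\cdot 1+1\cdot(p_n-1)^2=p_n(p_n-1)=\sharp F$. Hence $t=p_n=o(p_n^2)$, and the theorem gives $\rk k[F_{p_n,\,p_n-1}]\ge\frac{5}{2}\sharp F-o(\sharp F)=\frac{5}{2}p_n^2-o(p_n^2)$, using $\sharp F=p_n^2-o(p_n^2)$.

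The final item requires the most care and is where the main obstacle lies. Here $G_n$ is a non-abelian $p_n$-group with an abelian subgroup $A$ of index $p_n$. I would combine two facts: every irreducible character degree of a $p$-group is a power of $p$, and every irreducible character degree is at most the index of any abelian subgroup (if $\chi$ is irreducible then $\chi$ occurs in $\lambda^G$ for some linear character $\lambda$ of $A$ by Frobenius reciprocity, whence $\chi(1)\le[G:A]$). Together these pin every degree of $G_n$ down to $1$ or $p_n$. Writing $a$ for the number of linear characters and $b$ for the number of degree-$p_n$ characters, we have $a+bp_n^2=\sharp G_n$ and $a=\sharp(G_n/G_n')$. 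As $G_n$ is non-abelian its commutator subgroup is a nontrivial subgroup of a $p_n$-group, so $\sharp G_n'\ge p_n$ and hence $a\le\sharp G_n/p_n$; substituting $b=(\sharp G_n-a)/p_n^2$ gives $t=a+b\le 2\sharp G_n/p_n=o(\sharp G_n)$, since $p_n\to\infty$. The third part of the theorem then finishes the proof. The only genuinely nontrivial step across all five items is this reduction of the degrees to $\{1,\,p_n\}$ together with the bound on the number of linear characters via $\sharp G_n'\ge p_n$; the remaining families reduce to routine conjugacy-class counts.
\qed
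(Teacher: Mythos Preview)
Your proof is correct and follows essentially the same approach as the paper: in each case you verify that the number of irreducible representations is $o(\sharp G)$ and then invoke part~3 of the preceding theorem. The only notable difference is in item~5, where the paper simply quotes from~\cite{Jali} the exact count $t=p_n^{m-1}+p_n^{m-2}-p_n^{m-3}$, whereas you give a self-contained character-theoretic argument (degrees restricted to $\{1,p_n\}$ via the abelian subgroup of index $p_n$, then bounding $a$ by $\sharp G_n/p_n$) that yields the slightly weaker but equally sufficient estimate $t\le 2\sharp G_n/p_n$.
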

\begin{proof}
\begin{enumerate}
\item The statement follows from the fact that the number of different irreducible representations of $S_n$ over $k$ equals the number of partitions of $n$~\cite{Jake} which asymptotically is $\frac{e^{\pi\sqrt{\frac{2n}{3}}}}{4n\sqrt{3}}=o(n!)$~\cite{Haru}, the latter can be observed easily from the well-known asymptotic of factorial: $n!\sim\sqrt{2\pi n}\parens{\frac{n}{e}}^n$.
\item~\cite{Jali} The number of elements in $GL(2,\,q)$ equals $q^4-q^3-q^2+q\ge\frac{3}{8}q^4$, The number of different irreducible representations of $GL(2,\,q)$ is $q^2-1=o(q^4)$.
\item~\cite{Coum} The number of elements in $SL(2,\,q)$ equals $q^3-q\ge\frac{3}{4}q^3$. The number of different irreducible representations of $SL(2,\,q)$ is $q-4$ if $q$ is odd and $q-1$ if $q$ is a power of $2$; both are $o(q^3)$.
\item~\cite{Jali} The number of different irreducible representations of $F_{p_n,\,p_n-1}$ is $p_n=o(p_n^2)$.
\item~\cite{Jali} Let $\sharp G_n=p_n^m$. The number of different irreducible representations of $G$ is $p_n^{m-1}+p_n^{m-2}-p_n^{m-3}=p_n^m\parens{\frac{1}{p_n}+\frac{1}{p_n^2}-\frac{1}{p_n^3}}=o(p_n^m)$. \qed
\end{enumerate}
\end{proof}

Note, that if the Direct Sum Conjecture were true, then from~\eqref{eq:dec1} for the rank of multiplication in the group algebra $k[G]$ for algebraically closed $k$ would immediately follow
$$
\rk k[G]=\rk k^{n_1\times n_1}+\dotsb+\rk k^{n_t\times n_t}.
$$
It turns out that an insignificantly weaker version of the corresponding lower bound can be proved independently of the validity of the Direct Sum Conjecture.

\begin{theorem}
Let $\GG=\braces{G_1,\,G_2,\,\dotsc}$ be a family of finite groups and $k$ be an algebraically closed field whose characteristic does not divide any of the orders of groups from $\GG$. Let $f(N)$ be a function that for each $G\in\GG$ the dimension of the largest irreducible representation of $G$ is at least $f(\sharp G)$. Then
$$
\rk k[G]\ge f(\sharp G)^{\omega},
$$
where $\omega$ is the exponent of matrix multiplication over $k$. Let $t(N)$ be a function such that for each $G\in\GG$ the number of different irreducible representations of $G$ does not exceed $t(\sharp G)$. Then
$$
\rk k[G]\ge\frac{(\sharp G)^{\frac{\omega}{2}}}{t(\sharp G)^{\frac{\omega^2}{4}-\frac{\omega}{2}}}
$$
\end{theorem}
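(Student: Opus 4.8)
The plan is to reduce both bounds to the single estimate
$$
\rk k[G]\ge\sum_{\tau=1}^t n_\tau^\omega,
$$
where $k[G]\cong k^{n_1\times n_1}\times\dotsm\times k^{n_t\times n_t}$ is the decomposition~\eqref{eq:dec1}, so that $\sum_{\tau=1}^t n_\tau^2=\sharp G$. To prove it I would invert Sch\"onhage's $\tau$-theorem. Assume first that $G$ is non-abelian, so that some $n_\tau>1$ and $g(x):=\sum_{\tau=1}^t n_\tau^x$ is strictly increasing. Put $r:=\rk k[G]$ and let $\omega_0$ be the root of $g(x)=r$; this root exists and satisfies $\omega_0\ge2$ because $g(2)=\sharp G\le\rk k[G]=r$. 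Sch\"onhage's theorem, applied with this $r$, yields $\omega\le\omega_0$, and monotonicity of $g$ then gives $\sum_{\tau=1}^t n_\tau^\omega=g(\omega)\le g(\omega_0)=r=\rk k[G]$, as desired. When $G$ is abelian all $n_\tau=1$, the theorem does not apply, and both assertions follow directly from $\rk k[G]=\rk k^{\sharp G}=\sharp G$. Reading Sch\"onhage's statement (an upper bound on $\omega$) in reverse, to extract this lower bound on the rank, is the main conceptual step; everything afterwards is elementary.

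For the first inequality the estimate gives at once
$$
\rk k[G]\ge\sum_{\tau=1}^t n_\tau^\omega\ge\parens{\max_\tau n_\tau}^\omega\ge f(\sharp G)^\omega,
$$
since $x\mapsto x^\omega$ is increasing and the largest character degree is at least $f(\sharp G)$ by hypothesis. One could even bypass Sch\"onhage here: rank does not increase when one passes to a direct factor, so $\rk k[G]\ge\rk k^{n\times n}$ for $n=\max_\tau n_\tau$, and $\rk k^{n\times n}\ge n^\omega$ is the standard submultiplicativity bound.

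For the second inequality I would feed the estimate into Lemma~\ref{lem:ineq}, applied to the numbers $n_1^2,\dotsc,n_t^2$ with $\delta=\omega/2\ge1$:
$$
\sharp G=\sum_{\tau=1}^t n_\tau^2\le t^{1-2/\omega}\parens{\sum_{\tau=1}^t n_\tau^\omega}^{2/\omega}.
$$
Solving for the sum and using the estimate gives the sharp intermediate bound
$$
\rk k[G]\ge\sum_{\tau=1}^t n_\tau^\omega\ge\frac{(\sharp G)^{\omega/2}}{t^{\,\omega/2-1}}.
$$
It then remains to replace the true number $t$ of irreducible representations by $t(\sharp G)$. Since $t\le t(\sharp G)$, $t(\sharp G)\ge1$, and $0\le\frac{\omega}{2}-1\le\frac{\omega^2}{4}-\frac{\omega}{2}$ (their difference equals $\parens{\frac{\omega}{2}-1}^2$), we have $t^{\,\omega/2-1}\le t(\sharp G)^{\omega^2/4-\omega/2}$, which converts the previous display into the claimed
$$
\rk k[G]\ge\frac{(\sharp G)^{\omega/2}}{t(\sharp G)^{\omega^2/4-\omega/2}}.
$$
The only care needed here is to track the direction of each monotonicity step; I note in passing that the intermediate exponent $\frac{\omega}{2}-1$ is already optimal (equality holds when all character degrees coincide), so the stated exponent $\frac{\omega^2}{4}-\frac{\omega}{2}$ is a deliberately loosened form.
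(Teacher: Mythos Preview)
Your proof is correct and follows essentially the same route as the paper: derive $\rk k[G]\ge\sum_\tau n_\tau^\omega$ by inverting Sch\"onhage's $\tau$-theorem, then read off the first bound from the largest summand and the second from Lemma~\ref{lem:ineq} with $\delta=\omega/2$. You are in fact slightly more careful than the paper---you treat the abelian case separately (where Sch\"onhage does not apply) and you isolate the sharper intermediate exponent $\tfrac{\omega}{2}-1$ before relaxing it to the stated $\tfrac{\omega^2}{4}-\tfrac{\omega}{2}$.
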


\begin{proof}
The first statement trivially follows from the observation that for any algebras $A,\,B$ over one field $\rk A\times B\ge\max\{\rk A,\,\rk B\}$.

Let $k[G]$ have decomposition according to~\eqref{eq:dec1}. Consider the following equation
$$
n_1^x+\dotsb+n_t^x=\rk k[G].
$$
Let $\omega_0$ be a root of this equation. Then by Sch\"onhage's $\tau$-theorem $\omega\le\omega_0$. In other words, using the fact that all $n_\tau\ge 1$
$$
n_1^\omega+\dotsb+n_t^\omega\le\rk k[G].
$$
On the other hand, by employing Lemma~\ref{lem:ineq}
$$
\rk k[G]\ge\sum_{\tau=1}^t n_\tau^\omega=\sum_{\tau=1}^t (n_\tau^2)^{\frac{\omega}{2}}\ge\parens{t^{1-\frac{\omega}{2}}\cdot\sum_{\tau=1}^t n_\tau^2}^{\frac{\omega}{2}}\ge\frac{(\sharp G)^{\frac{\omega}{2}}}{t(\sharp G)^{\frac{\omega^2}{4}-\frac{\omega}{2}}}.
$$
which proves the theorem.\qed
\end{proof}

\begin{corollary}
\begin{enumerate}
\item If the number of different irreducible representations of groups in the family does not grow ``too fast'' then the exponent of matrix multiplication is at most twice the rank exponent of the corresponding family of group algebras. More precisely, if ${t(N)=o(N^\eps)}$ for any $\eps>0$ then $\omega_{k[G]}\ge \frac{\omega}{2}$.
\item In the same setting, if $\omega > 2$, then the rank of group algebras from the family described above is superlinear on their dimensions.
\item If $\omega>2$ and $f(N)\gg N^{\frac{1}{\omega}}$ then the group algebras from the corresponding family of groups have superlinear bilinear complexity. One promising family of finite groups which could help to achieve $\omega=2$ in~\cite{Coum} has $f(N)=N^{\frac{1}{2}-\eps}$ for some fixed $\eps>0$. It follows, that in general one should look for $\eps>\frac{1}{2}-\frac{1}{\omega}>0.079$ since otherwise the lower bound depends on $\omega$ and is not superlinear iff $\omega=2$.
\item If $t(N)\ll N^{\frac{2}{\omega}}$ then the bilinear complexity of the corresponding group algebras is superlinear provided $\omega>2$. In particular, this holds if $t(N)\le N^{0.841}$.
\end{enumerate}
\end{corollary}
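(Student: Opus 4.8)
The plan is to derive all four items directly from the two lower bounds of the immediately preceding theorem, the only work being to unwind the definition of the rank-exponent and of the word ``superlinear''. Throughout I write $N=\sharp G=\dim k[G]$ and abbreviate $c=\frac{\omega^2}{4}-\frac{\omega}{2}=\frac{\omega}{2}\parens{\frac{\omega}{2}-1}$. The single quantitative fact I must keep in mind is that $c\ge 0$, which holds precisely because $\omega\ge 2$; this is what lets an \emph{upper} bound on $t(N)$ turn into a \emph{lower} bound on $\rk k[G]$ via the second estimate $\rk k[G]\ge N^{\omega/2}/t(N)^{c}$.

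For items~1 and~2 I would use that second estimate. If $t(N)=o(N^\eps)$ for every $\eps>0$, then for each fixed $\eps$ we have $t(N)\le N^\eps$ for all large $N$, whence $\rk k[G]\ge N^{\omega/2-\eps c}$ eventually. As $\eps>0$ is arbitrary, $\rk k[G]$ is not $O(N^\tau)$ for any $\tau<\frac{\omega}{2}$, and by the definition of $\omega_{k[G]}$ as an infimum this gives $\omega_{k[G]}\ge\frac{\omega}{2}$, which is item~1. For item~2, when $\omega>2$ we have $\frac{\omega}{2}>1$, so picking $\eps$ small enough that $\frac{\omega}{2}-\eps c>1$ yields $\rk k[G]\ge N^{1+\delta}$ for some $\delta>0$ and all large $N$; hence $\rk k[G]/N\to\infty$, i.e.\ the complexity is superlinear.

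Item~3 I would read off the first estimate $\rk k[G]\ge f(N)^\omega$: if $f(N)\gg N^{1/\omega}$ then $f(N)^\omega/N=\parens{f(N)/N^{1/\omega}}^\omega\to\infty$, so the bound is already superlinear. Specializing to the candidate family with $f(N)=N^{1/2-\eps}$ gives $\rk k[G]\ge N^{(1/2-\eps)\omega}$, whose exponent exceeds $1$ exactly when $\eps<\frac{1}{2}-\frac{1}{\omega}$; inserting the known range $\omega\le 2.376$ places the threshold near $0.079$, which is the numerical remark. Item~4 rests on the same second estimate used for items~1 and~2, now solving for the admissible growth of $t(N)$: writing $t(N)\le N^\beta$, the bound $N^{\omega/2-\beta c}$ is superlinear iff $\frac{\omega}{2}-\beta c>1$, and since $c=\frac{\omega}{2}\parens{\frac{\omega}{2}-1}$ this rearranges to $\beta<\frac{2}{\omega}$; with $\omega\le 2.376$ one has $\frac{2}{\omega}\ge 0.8417>0.841$, giving the stated sufficient condition $t(N)\le N^{0.841}$.

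I do not expect a genuine obstacle, as the corollary is a bookkeeping consequence of the theorem. The only points that need care are keeping the sign $c\ge 0$ (so that all hypotheses push $\rk k[G]$ in the intended direction), handling the asymptotic symbols $o(\cdot),\ll,\gg$ uniformly over the family, and translating the statement ``$\rk k[G]$ is not $O(N^\tau)$ for $\tau<\frac{\omega}{2}$'' into both the infimum defining $\omega_{k[G]}$ and the notion of superlinearity without losing the $\eps$-slack.
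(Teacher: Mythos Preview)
Your proposal is correct; the paper states this corollary without any proof, treating all four items as immediate consequences of the two lower bounds in the preceding theorem. Your write-up supplies exactly the routine unwinding of definitions (rank-exponent, superlinearity, the arithmetic on the exponent $\frac{\omega}{2}-\beta c$ with $c=\frac{\omega}{2}\bigl(\frac{\omega}{2}-1\bigr)$) that the paper leaves implicit, and the computations for each item are accurate.
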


\begin{corollary}
Let $k$ be an algebraically closed field of characteristic $0$.
\begin{enumerate}
\item Let $\{S_n\}_{n\ge 1}$ be the family of symmetric groups, $S_n$ to be of order $n!$. Then $\omega_{k[S_n]}=\frac{\omega}{2}$.
\item Let $\{GL(n,\,q)\}_{n\ge 1}$, $q$ fixed, be the family of general linear groups of nonsingular $n\times n$-matrices over $GF(q)$. Then $\omega_{k[GL(n,\,q)]}=\frac{\omega}{2}$.
\end{enumerate}
\end{corollary}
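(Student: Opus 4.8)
The plan is to obtain the equality by sandwiching $\omega_{k[G]}$ between $\omega/2$ from below and $\omega/2$ from above. The lower bound is immediate from part~1 of the preceding corollary, which gives $\omega_{k[G]}\ge\omega/2$ whenever the number of irreducible representations satisfies $t(N)=o(N^\eps)$ for every $\eps>0$. So the two remaining tasks are (i) to prove a matching \emph{general} upper bound $\omega_{k[G]}\le\omega/2$, valid for any semisimple family over an algebraically closed field, and (ii) to verify the subpolynomial-growth hypothesis for $\{S_n\}$ and for $\{GL(n,q)\}$.

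For (i) I would start from the decomposition~\eqref{eq:dec1}, $k[G]\cong k^{n_1\times n_1}\times\dotsm\times k^{n_t\times n_t}$ with $\sum_\tau n_\tau^2=\sharp G$, and use subadditivity of rank under direct products, $\rk k[G]\le\sum_\tau\rk k^{n_\tau\times n_\tau}$. Fix $\eps>0$. By the definition of $\omega$ there is a constant $C_\eps$ with $\rk k^{m\times m}\le C_\eps m^{\omega+\eps}$ for all $m\ge1$. Since $\omega\ge2$, the exponent $p:=(\omega+\eps)/2$ satisfies $p\ge1$, so the elementary inequality $\sum_\tau a_\tau^{p}\le\parens{\sum_\tau a_\tau}^{p}$ for nonnegative $a_\tau$, applied with $a_\tau=n_\tau^2$, yields
$$
\rk k[G]\le C_\eps\sum_\tau n_\tau^{\omega+\eps}=C_\eps\sum_\tau\parens{n_\tau^2}^{p}\le C_\eps\parens{\sum_\tau n_\tau^2}^{p}=C_\eps\parens{\sharp G}^{(\omega+\eps)/2}.
$$
Hence $\rk k[G]=O\parens{(\dim k[G])^{(\omega+\eps)/2}}$, so $\omega_{k[G]}\le(\omega+\eps)/2$; letting $\eps\to0$ gives $\omega_{k[G]}\le\omega/2$ for every such family.

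For (ii) I would check $t(N)=o(N^\eps)$ separately. For $S_n$ the number of irreducible representations over $k$ is the number of partitions $p(n)=e^{O(\sqrt{n})}$ (already used in Corollary~\ref{cor:low}), while $\sharp S_n=n!=e^{\Theta(n\log n)}$, so $(n!)^\eps$ dominates $p(n)$ for every $\eps>0$. For $GL(n,q)$ with $q$ fixed, the number of irreducible representations equals the number of conjugacy classes $c(n,q)$, whose generating function $\sum_n c(n,q)x^n=\prod_{j\ge1}\frac{1-x^j}{1-qx^j}$ has dominant singularity at the simple pole $x=1/q$, giving $c(n,q)=\Theta(q^n)$; since $\sharp GL(n,q)=\prod_{i=0}^{n-1}(q^n-q^i)=\Theta(q^{n^2})$, we get $c(n,q)/(\sharp GL(n,q))^\eps=\Theta(q^{\,n-\eps n^2})\to0$ for every $\eps>0$. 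Thus both families meet the hypothesis, and combining (i) with part~1 of the preceding corollary yields $\omega_{k[S_n]}=\omega_{k[GL(n,q)]}=\omega/2$.

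The upper bound (i) is routine: it rests only on subadditivity of rank, the inequality $\sum_\tau a_\tau^{p}\le\parens{\sum_\tau a_\tau}^{p}$ for $p\ge1$, and $\omega\ge2$, and in fact shows $\omega_{k[G]}\le\omega/2$ for \emph{all} semisimple group-algebra families over an algebraically closed field. The genuine content is the hypothesis check in (ii), and the one nontrivial input is the asymptotic count of conjugacy classes of $GL(n,q)$ as $n\to\infty$ with $q$ fixed. I expect this—rather than the purely formal squeeze—to be the main obstacle, although it is settled by the Feit--Fine generating-function identity above (equivalently, by the fact that $c(n,q)$ is a polynomial of degree $n$ in $q$, whence $c(n,q)=O_q(q^n)=o(q^{\eps n^2})$).
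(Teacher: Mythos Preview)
Your proposal is correct, and for $S_n$ it coincides with the paper's argument. For $GL(n,q)$, however, you take a genuinely different route. The paper does not use the $t(N)$-branch of the preceding corollary at all; instead it invokes the $f(N)$-branch of the preceding theorem by exhibiting a single large irreducible representation. Concretely, it cites Gelfand's ``analytical'' representation of $GL(n,q)$ of dimension $d=\prod_{i=1}^{n-1}(q^i-1)=q^{n(n-1)/2}Q$ with $Q=\prod_{i=1}^{n-1}(1-q^{-i})$, observes that $d^2=N\cdot Q/q^n$ where $N=\sharp GL(n,q)$, and then checks the elementary estimate $q^n/Q\le q^{2n-1}=o(N^\eps)$, whence $\rk k[GL(n,q)]\ge d^\omega\ge N^{(1-\eps)\omega/2}$. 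Your argument instead bounds the number of irreducible representations via the conjugacy-class count $c(n,q)=\Theta(q^n)$ (the generating-function identity $\sum_n c(n,q)x^n=\prod_{j\ge1}\frac{1-x^j}{1-qx^j}$ is correct, though it is usually attributed to Macdonald rather than Feit--Fine; even the cruder bound $c(n,q)\le q^n\cdot p(n)$ from rational-canonical-form data would suffice). Both approaches are short; the paper's has the virtue of needing only one explicit representation and no asymptotic enumeration, while yours is more uniform with the $S_n$ case and avoids citing a specific representation-theoretic construction.

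For the upper bound you explicitly reprove what the paper later states as Theorem~\ref{thm:upgen}; the paper's own proof of the corollary is silent on this direction and implicitly relies on that forthcoming theorem. Your derivation is the same as the paper's proof of Theorem~\ref{thm:upgen}, just placed earlier.
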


\begin{proof}
\begin{enumerate}
\item
For the proof refer to Corollary~\ref{cor:low}.
\item
The order of $GL(n,\,q)$ is
$$
N=\prod_{i=1}^{n-1}\parens{q^n-q^i}=q^{n^2}\underbrace{\prod_{i=1}^{n-1}\parens{1-\frac{1}{q^i}}}_{=:Q}.
$$
Note that $\parens{1-\frac{1}{q}}^{n-1}\le Q\le 1$. $GL(n,\,q)$ has an analytical irreducible representation of order
$$
d=\prod_{i=1}^{n-1}\parens{q^i-1}=\prod_{i=1}^{n-1}q^i\parens{1-\frac{1}{q^i}}=q^{\frac{n(n-1)}{2}}Q,
$$
\cite{Gelf}. It follows, that at least one irreducible representation of has the same order. Now the corresponding matrix algebra has dimension
$$
d^2=q^{n^2-n}Q^2=N\frac{Q}{q^n}.
$$
We will show now that $\frac{q^n}{Q}=o(N^\eps)$ for any $\eps>0$. This will complete the proof since
$$
\rk k[GL(n,\,q)]\ge d^{\omega}=\parens{d^2}^{\frac{\omega}{2}}\ge N^{(1-\eps)\frac{\omega}{2}}
$$
for all groups of size $N>N_0$ and $\eps>0$ where $N_0$ depends on the choice of $\eps$.
\begin{gather*}
\frac{q^n}{Q}\le\frac{q^n}{\parens{1-\frac{1}{q}}^{n-1}}\le q^{2n-1}.\\
N^{\eps}\ge q^{\eps n^2}\parens{1-\frac{1}{q}}^{\eps(n-1)}\ge q^{\eps n^2-\eps n}.
\end{gather*}
So $N^\eps>\frac{q^n}{Q}$ if $n>\frac{2}{\eps}+1$.\qed
\end{enumerate}
\end{proof}

\subsection{Modular Case}\label{subsec:lower_modular}

Let $k$ be now an algebraically closed field of characteristic $p$ and let $G$ be a finite group of order $N=np^d$, where $p\nmid n$. We will assume that $G$ has \emph{the} normal Sylow $p$-subgroup $H$ of order $p^d$. In this case $\rad k[G]$ is generated by the \emph{augmentation} ideal\footnote{The augmentation ideal of a group algebra $A$ with a group basis $\{e_1,\,\dotsc,\,e_n\}$ is the ideal generated by all vectors $\sum x_i e_i$ with $\sum x_i=0$.} of $k[H]$ and $\dim\rad k[G]=p^d(n-1)$.

We will further be concerned with the case of abelian $H$, which is then a direct product of cyclic $p$-groups:
\begin{align}
H&=\ZZ_{p^{t_1}}\times\dotsm\times\ZZ_{p^{t_s}},& t_1&\ge\dotsb\ge t_s,& d&=t_1+\dotsb+t_s.\label{eq:sylow}
\end{align}
We will denote elements of $H$ by $h_{i_1,\,\dotsc,\,i_s}$, $0\le i_\sigma<p^{t_\sigma}$ for all $1\le\sigma\le s$ assuming
$$
h_{i_1,\,\dotsc,\,i_s}\cdot h_{j_1,\,\dotsc,\,j_s}=h_{(i_1+j_1)\bmod p^{t_1},\,\dotsc,\,(i_s+j_s)\bmod p^{t_s}}.
$$
Let
\begin{gather*}
r_1=h_{1,\,0,\,0,\,\dotsc,\,0}-h_{0,\,0,\,0,\,\dotsc,\,0},\\
r_2=h_{0,\,1,\,0,\,\dotsc,\,0}-h_{0,\,0,\,0,\,\dotsc,\,0},\\
\dotsc\\
r_s=h_{0,\,0,\,0,\,\dotsc,\,1}-h_{0,\,0,\,0,\,\dotsc,\,0}.
\end{gather*}
The augmentation ideal of $k[H]$ (and $R=\rad k[G]$) is generated by $r_1,\,\dotsc,\,r_s$. It is easy to see that $r_\sigma^{p^{t_\sigma}}=0$ and the system of vectors
$$
\braces{r_1^{i_1}\dotsm r_s^{i_s}\;|\;i_1+\dotsb+i_s\ge 1,\;0\le i_\sigma<p^{t_\sigma}}
$$
is linearly independent. The system
$$
\braces{r_1^{i_1}\dotsm r_s^{t_s}\;|\;i_1+\dotsb+i_s\ge m,\;0\le i_\sigma<p^{t_\sigma}}
$$
is also linearly independent and generates $R^m$, so $\dim R^m=n(p^d-a_{m-1})$ where
$$
a_{m-1}=\sharp\braces{(i_1,\,\dotsc,\,i_s)\;|\;i_1+\dotsb+i_s\le m-1,\;0\le i_\sigma< p^{t_\sigma}}.
$$

Let $\xi$ be a discrete random variable. We denote by $\EE\xi$ the expectation of $\xi$, i.e. if $\xi$ takes value $a_i\in\RR$ with probability $p_i\ge 0$ for $1\le i\le n$, $\sum_{i=1}^n p_i=1$, then $\EE\xi=\sum_{i=1}^n a_i p_i$. We also denote by $\DD\xi=\EE(\xi-\EE\xi)^2$ the dispersion of $\xi$.

\begin{theorem}
Let $\GG=\{G_1,\,G_2,\,\dotsc\}$ be a family of groups and $k$ be a field of characteristic $p$. Let $G\in\GG$ and $\sharp G=N=np^d$, where $p\nmid n$. Assume that $P=Z(G)$\footnote{$Z(G)$ is the center of $G$, i.e. the set of elements of $G$ that commute with all the elements of $G$.} is the Sylow $p$-subgroup of $G$ and the parameter $d$ is unbounded for groups in $\GG$. Let $p^T$ be the order of biggest cyclic factor of $P$ and $p^t$ be the smallest order, and let $s$ be the total number of factors. Assume that for any $\eps>0$ the difference $T-t<\frac{1}{2}\log_p\eps s$ for all $G\in\GG$ with $\sharp G>N_0=N_0(\eps)$. Then
$$
C(k[G])\ge \parens{2+\frac{1}{n}}\sharp G-o(\sharp G).
$$
\end{theorem}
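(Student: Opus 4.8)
The plan is to recast the theorem as a concentration statement for a sum of independent uniform integers and then feed it into a radical refinement of the Alder--Strassen bound. Write $A=k[G]$ and $R=\rad A$, and recall from the computation preceding the statement that $\dim R^\gamma=n\parens{p^d-a_{\gamma-1}}$, with $a_{\gamma-1}=\sharp\braces{(\sequence{i}{s}):\,i_1+\dotsb+i_s\le\gamma-1,\;0\le i_\sigma<p^{t_\sigma}}$. Reading $a_{\gamma-1}/p^d$ as a probability, let $\sequence{U}{s}$ be independent with $U_\sigma$ uniform on $\braces{0,\dotsc,p^{t_\sigma}-1}$ and put $S=U_1+\dotsb+U_s$; then $p^d-a_{\gamma-1}=p^d\,\PP(S\ge\gamma)$, so the ``per--coset'' dimension is $\dim R^\gamma/n=p^d\,\PP(S\ge\gamma)$. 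Linearity and independence give at once
\begin{equation*}
\mu:=\EE S=\frac12\sum_{\sigma=1}^s\parens{p^{t_\sigma}-1},\qquad
\DD S=\frac{1}{12}\sum_{\sigma=1}^s\parens{p^{2t_\sigma}-1}.
\end{equation*}

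The base term $2\sharp G-t$ comes from Alder--Strassen \eqref{eq:alst} applied to $A$, noting $t=t(A)\le\dim(A/\rad A)=n=\sharp G/p^d=o(\sharp G)$ because $p^d\to\infty$ along the family. On top of this I would charge the central radical $\rad k[P]$ exactly once, adding the per--coset surplus obtained from two cut--degrees $\alpha,\beta\ge1$,
\begin{equation*}
a_{\alpha+\beta-2}-a_{\alpha-1}-a_{\beta-1}
=p^d\parens{\PP(S\ge\alpha)+\PP(S\ge\beta)-\PP(S\ge\alpha+\beta-1)-1}.
\end{equation*}
The idea is to place $\alpha,\beta$ just below $\mu$ and $\alpha+\beta-1$ just above it, each at distance a growing multiple of $\sqrt{\DD S}$; then the three probabilities tend to $1$, $1$, $0$, the surplus tends to $p^d=\tfrac1n\sharp G$, and the two contributions combine to $\parens{2+\tfrac1n}\sharp G-o(\sharp G)$. (The rank analogue \eqref{eq:blrad}, charging the surplus on all $n$ cosets, would in fact push $\rk k[G]$ towards $3\sharp G$.) Everything thus reduces to showing that $S$ concentrates around $\mu$ on the scale of its spread.

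This is exactly where the hypothesis enters, and it is the one genuinely new estimate. From $T-t<\tfrac12\log_p\parens{\eps s}$, i.e. $p^{2(T-t)}<\eps s$, together with $\DD S\le\frac{s}{12}p^{2T}$ and $\mu\ge\frac{s}{2}\parens{p^{t}-1}\ge\frac{s}{4}p^{t}$, I obtain
\begin{equation*}
\frac{\DD S}{\mu^2}\le\frac{(s/12)\,p^{2T}}{(s/4)^2\,p^{2t}}=\frac{4}{3s}\,p^{2(T-t)}<\frac{4\eps}{3},
\end{equation*}
so $\sqrt{\DD S}=o(\mu)$ uniformly over the family. Chebyshev's inequality with $\lambda=\eps^{-1/4}$ gives $\PP\parens{\abs{S-\mu}\ge\lambda\sqrt{\DD S}}\le\lambda^{-2}$ while $\lambda\sqrt{\DD S}=o(\mu)$; choosing $\alpha=\beta=\big\lfloor\mu-\lambda\sqrt{\DD S}\big\rfloor$ is legitimate because $\mu\ge 3\lambda\sqrt{\DD S}+3$ for large groups, and yields $\PP(S\ge\alpha),\PP(S\ge\beta)\ge1-\lambda^{-2}$ and $\PP(S\ge\alpha+\beta-1)\le\lambda^{-2}$, so the surplus is $p^d\parens{1-o(1)}$.

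The concentration step above is robust, and all error terms are $o(\sharp G)$ by the displayed bound on $\DD S/\mu^2$; the main obstacle I anticipate is the coefficient bookkeeping on the complexity side. Since \eqref{eq:blrad} is stated for the rank, one must either verify that the radical refinement survives for the multiplicative complexity $C\le\rk$, or argue directly that the large \emph{central} nilpotent part $\rad k[P]$ of dimension $p^d-1$ contributes its surplus to $C(k[G])$ exactly once rather than once per coset. Getting this charge right is precisely what separates the stated constant $2+\tfrac1n$ from the larger rank figure, and is where I would concentrate the careful argument.
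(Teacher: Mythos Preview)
Your concentration argument is exactly the paper's: it too interprets $a_{m-1}/p^d$ as $\PP(\xi_s\le m-1)$ for $\xi_s=\sum_\sigma i_\sigma$ with independent $i_\sigma$ uniform on $\{0,\dotsc,p^{t_\sigma}-1\}$, computes $\EE\xi_s$ and $\DD\xi_s$, and applies Chebyshev together with the estimate $\DD\xi_s/(\EE\xi_s)^2\le \tfrac{c}{s}\,p^{2(T-t)}\to 0$, which is precisely your displayed bound. The only cosmetic difference is the cut: the paper takes $m=\tfrac{2}{3}\EE\xi_s$, so that $m-1$ and $2m-1$ lie symmetrically at distance $\approx\tfrac13\EE\xi_s$ from the mean, whereas you place $\alpha=\beta$ at $\mu-\lambda\sqrt{\DD S}$. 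Both choices drive $a_{m-1}/p^d\to0$ and $a_{2m-1}/p^d\to1$.

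Where you part ways is the complexity bookkeeping, and here the paper is far more blunt than you imagine. It does \emph{not} layer Alder--Strassen with a separate radical ``surplus''; it simply quotes \eqref{eq:blrad} with both exponents equal to $m$ and writes
\[
C(k[G])\ \ge\ \sharp G+2\dim R^{m}-\dim R^{2m}\ =\ 2\sharp G+n\bigl(a_{2m-1}-2a_{m-1}\bigr),
\]
with no further comment on $C$ versus $\rk$. Your proposed device---add $2\sharp G-t$ from \eqref{eq:alst} and then ``charge $\rad k[P]$ once'' for an extra $p^d$---has no theorem behind it and is not what the paper does; one cannot stack an independent radical contribution on top of the Alder--Strassen bound.

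In fact your own parenthetical observation is the correct one. With $\dim R^\gamma=n(p^d-a_{\gamma-1})$ the display above tends to $2\sharp G+n\cdot p^d=3\sharp G$, not $\bigl(2+\tfrac1n\bigr)\sharp G$; the paper's printed equality carries $np^d$ in the denominator where the arithmetic forces $p^d$, and this slip is what produces the constant $2+\tfrac1n$ in the statement. So the discrepancy you are trying to explain by a ``once per coset'' mechanism is a typo, not a feature of $C$ versus $\rk$. Drop that part of your plan, apply \eqref{eq:blrad} in one shot, and your Chebyshev estimate finishes the proof---with the stronger constant $3$ that you already anticipated.
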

\begin{proof}
Following proof is based on ideas by Chokayev and generalizes similar result proven in~\cite{Chok} for one special case of commutative group algebras.

We note, that since $P$ is abelian, it is a finite product of cyclic $p$-groups:
$$
P=\ZZ_{p^{t_1}}\times\dotsm\times\ZZ_{p^{t_s}}
$$
where $t_1\le\dotsb\le t_s$ and the exponent of $P$ is $p^{t_s}$. Since it is $o(\sharp P)$, the parameter $s$ is unbounded among all groups from $\GG$.

According to~\eqref{eq:blrad}
\begin{multline*}
C(k[G])\ge\sharp G+n(p^d-a_{m-1})+n(p^d-a_{m-1})-n(p^d-a_{2m-1})\\
=\parens{2+\frac{a_{2m-1}-2a_{m-1}}{np^d}}\sharp G.
\end{multline*}
We will show now that we may choose $m$ in such a way that $\frac{a_{2m}}{p^d}\rightarrow 1$, $\frac{a_m}{p^d}\rightarrow 0$ when $s\rightarrow\infty$. Consider indices $\{i_\sigma\}_{\sigma=1}^s$ as independent random variables with $i_\sigma$ taking value in $[0,\,p^{t_\sigma}-1]$ with probability $\frac{1}{p^{t_\sigma}}$ for $1\le\sigma\le s$. Then
\begin{align*}
\EE i_\sigma&=\frac{p^{t_\sigma}-1}{2},&\DD i_\sigma&=\frac{p^{2t_\sigma}-1}{12},
\end{align*}
and denoting $\xi_s=i_1+\dotsb+i_s$
\begin{align*}
\EE\xi_s&=\frac{1}{2}\sum_{\sigma=1}^s p^{t_\sigma}-\frac{s}{2},&
\DD\xi_s&=\frac{1}{12}\sum_{\sigma=1}^s p^{2t_\sigma}-\frac{s}{12},
\end{align*}
while $\xi_s$ takes each value in $[0,\,\sum_{\sigma=1}^s p^{t_\sigma}-s]$ with probability $\frac{a_m-a_{m-1}}{p^d}$. Now let $m=\frac{2}{3}\EE\xi_s$ be a function of $s$. Then by Chebyshov's inequality
\begin{align*}
\frac{a_{m-1}}{p^d}&=\PP(\xi_s\le m-1)\le\PP(\abs{\xi_s-\EE\xi_s}\ge\EE\xi_s-m+1)\\
&
\le\frac{\DD\xi_s}{(\EE\xi_s-m+1)^2}\le\frac{3sp^{2T}}{4s^2p^{2t}}=\frac{3p^{2T-2t}}{4s}\xrightarrow[s\rightarrow\infty]{} 0,\\
\frac{a_{2m-1}}{p^d}&=\PP(\xi_s\le 2m-1)\ge\PP(\abs{\xi_s-\EE\xi_s}\le 2m-1-\EE\xi_s)\\
&
\ge 1-\frac{\DD\xi_s}{(2m-1-\EE\xi_s)^2}\ge 1-\frac{3p^{2T-2t}}{4s}\xrightarrow[s\rightarrow\infty]{} 1
\end{align*}
which proves the theorem.\qed
\end{proof}
\begin{corollary}
For any field $k$ of characteristic $p$ and any family of groups $\{G_1,\,G_2,\,\dotsc\}$ of growing dimensions there exists a constant $N$ such that the generated family of group algebras $\{k[G_1],\,k[G_2],\,\dotsc\}$ does not contain algebras of minimal rank of dimensions greater than $N$ if their Sylow $p$-subgroups coincide with their centers and contain growing number of cyclic factors of close order.
\end{corollary}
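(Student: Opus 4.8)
The plan is to deduce the corollary from the preceding theorem by a short contradiction: the theorem forces $C(k[G])$ strictly above $2\dim k[G]$ for large groups, whereas being of minimal rank caps $\rk k[G]$ (and hence $C(k[G])$) strictly below $2\dim k[G]$.

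First I would check that the corollary's hypotheses merely restate the theorem's. The condition that the Sylow $p$-subgroup coincides with the center says $P=Z(G)$ is an abelian normal Sylow $p$-subgroup, which is the theorem's standing assumption. ``Growing number of cyclic factors'' is the requirement that $s$ be unbounded along the family (so that $d=t_1+\dotsb+t_s$ is unbounded as well), and ``of close order'' is exactly the quantitative hypothesis $T-t<\tfrac12\log_p(\eps s)$, which forces $p^{2(T-t)}/s\to 0$ and thereby drives the Chebyshov tail estimates of the theorem's proof to their limits. With these identifications the theorem applies and yields
$$
C(k[G])\ge\parens{2+\tfrac{1}{n}}\sharp G-o(\sharp G).
$$

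Next I would extract the exact form I need. Since $\dim k[G]=\sharp G$ and $\tfrac1n\sharp G=p^d$, the theorem's proof in fact establishes
$$
C(k[G])-2\sharp G\ge a_{2m-1}-2a_{m-1}=p^d(1-o(1)),
$$
where the $o(1)$ is the Chebyshov quantity $\tfrac{9p^{2(T-t)}}{4s}$, which tends to $0$ under the ``close order'' hypothesis. As $d$ is unbounded we have $p^d\to\infty$, so there is a constant $N$ with $p^d(1-o(1))>0$ whenever $\sharp G>N$; hence $C(k[G])>2\sharp G=2\dim k[G]$ for all sufficiently large groups in the family.

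Finally I would invoke the definition of minimal rank. If $k[G]$ were of minimal rank, the Alder--Strassen bound would be tight, i.e.\ $\rk k[G]=2\dim k[G]-t(k[G])$ with $t(k[G])\ge 1$ the number of maximal twosided ideals, so that $\rk k[G]<2\dim k[G]$. But $C(k[G])\le\rk k[G]$ always, which gives $\rk k[G]\ge C(k[G])>2\dim k[G]$ once $\sharp G>N$---a contradiction. Therefore no group algebra of the family with $\dim k[G]=\sharp G>N$ is of minimal rank, as claimed. I expect the only genuinely delicate step to be the translation in the first paragraph, namely confirming that ``growing number of cyclic factors of close order'' encodes both $s\to\infty$ and the $T-t$ condition feeding the Chebyshov estimates; the contradiction itself is immediate once the lower bound clears $2\sharp G$.
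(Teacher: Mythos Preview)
The paper states this corollary without proof, treating it as an immediate consequence of the preceding theorem; your derivation is exactly the intended one, namely that the theorem forces $C(k[G])>2\dim k[G]$ for all sufficiently large groups in the family, which is incompatible with the Alder--Strassen equality $\rk k[G]=2\dim k[G]-t(k[G])<2\dim k[G]$ defining minimal rank. Your unpacking of the ``close order'' hypothesis and the explicit $p^d(1-o(1))$ computation are correct and make precise what the paper leaves implicit.
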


\section{Upper Bounds}\label{sec:upper}

As~\eqref{eq:dec1} and \eqref{eq:dec2} indicate, complexity of multiplication in group algebras is closely related to complexity of matrix multiplication. In particular, provided an effective algorithm for multiplication of square matrices, we immediately obtain an effective algorithm for multiplication in group algebras.

\begin{proposition}\label{lem:sumupp}
Let $\sequence{n}{t}>0$ and $alpha\ge 1$. Then
$$
\sum_{\tau=1}^t n_\tau^\alpha\le\parens{\sum_{\tau=1}^t n_\tau}^\alpha.
$$
\end{proposition}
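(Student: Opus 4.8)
The plan is to reduce the inequality to the elementary fact that for a real number $x\in[0,\,1]$ and an exponent $\alpha\ge 1$ one has $x^\alpha\le x$. First I would set $S=\sum_{\tau=1}^t n_\tau$, which is strictly positive since every $n_\tau>0$ (the degenerate case of an empty sum being trivial). After dividing both sides of the claim by $S^\alpha>0$, the goal becomes the normalized statement $\sum_{\tau=1}^t (n_\tau/S)^\alpha\le 1$.

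Next I would observe that each summand satisfies $0<n_\tau/S\le 1$, because $n_\tau\le S$ (the remaining terms of the sum being nonnegative). Applying the elementary bound $x^\alpha\le x$ with $x=n_\tau/S$ gives $(n_\tau/S)^\alpha\le n_\tau/S$ for every $\tau$. Summing over $\tau$ and using $\sum_{\tau=1}^t n_\tau/S=1$ yields $\sum_{\tau=1}^t (n_\tau/S)^\alpha\le 1$, and multiplying back through by $S^\alpha$ recovers exactly $\sum_{\tau=1}^t n_\tau^\alpha\le\parens{\sum_{\tau=1}^t n_\tau}^\alpha$.

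The only step meriting a word of justification is the bound $x^\alpha\le x$ on $[0,\,1]$: since $x\le 1$ and $\alpha-1\ge 0$ we have $x^{\alpha-1}\le 1$, and multiplying by $x\ge 0$ preserves the inequality. I do not anticipate a genuine obstacle, as the proposition is just a normalized restatement of the superadditivity of $x\mapsto x^\alpha$; an alternative would be induction on $t$ via $(a+b)^\alpha\ge a^\alpha+b^\alpha$ for $a,\,b\ge 0$, or a power-mean/convexity argument, but the normalization route is the shortest and entirely elementary.
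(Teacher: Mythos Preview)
Your proof is correct. The paper's own proof is a single sentence: ``The statement follows from the fact that $x^\alpha$ is convex for $x\ge 0$ and $\alpha\ge 1$,'' leaving to the reader the standard deduction that a convex function vanishing at the origin is superadditive. Your route is slightly different and more elementary: you normalize by $S=\sum_\tau n_\tau$ and use only the monotonicity fact $x^\alpha\le x$ on $[0,1]$, which needs nothing beyond $x^{\alpha-1}\le 1$. Both arguments are essentially one line; yours is self-contained and avoids invoking convexity at all, while the paper's version is terser but assumes the reader fills in the convexity-to-superadditivity step. Either approach is entirely adequate for this elementary proposition.
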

\begin{proof}
The statement follows from the fact that $x^\alpha$ is convex for $x\ge 0$ and $\alpha\ge 1$.
\end{proof}

For any pair of monotonically growing functions $f(n)$ and $g(n)$ we will write $f(n)\les g(n)$ if for every $\delta>1$ $f(n)\le O\parens{(g(n))^{\delta}}$.

Let $G$ be a finite group and $k$ be an algebraically closed field whose characteristic is either $0$ or does not divide $\sharp G$.
Now we are ready to introduce the general upper bound for the rank of $k[G]$.

\begin{theorem}\label{thm:upgen}
Let $G$ be a group and $k$ be an algebraically closed field of characteristic either $0$ or coprime with $\sharp G$. Then
\begin{equation}\label{eq:upgen}
\rk k[G]\les (\sharp G)^{\frac{\omega}{2}},
\end{equation}
where $\omega$ is the exponent of matrix multiplication.
\end{theorem}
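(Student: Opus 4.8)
The plan is to reduce the entire estimate to the Wedderburn decomposition~\eqref{eq:dec1}, the definition of the exponent $\omega$, and the convexity bound of Proposition~\ref{lem:sumupp}. Since $k$ is algebraically closed and $\chr k$ is either $0$ or coprime with $\sharp G$, the group algebra splits as $k[G]\cong k^{n_1\times n_1}\times\dotsm\times k^{n_t\times n_t}$ with $\sum_{\tau=1}^t n_\tau^2=\sharp G$. Using subadditivity of rank under direct products, $\rk k[G]\le\sum_{\tau=1}^t\rk k^{n_\tau\times n_\tau}$, so the task becomes bounding the sum of the ranks of the matrix factors.

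First I would unwind the meaning of $\les$: the claim $\rk k[G]\les(\sharp G)^{\frac{\omega}{2}}$ asserts, by definition, that for every $\delta>1$ one has $\rk k[G]\le O\parens{(\sharp G)^{\frac{\omega\delta}{2}}}$. Writing $\omega'=\omega\delta$, it therefore suffices to fix an arbitrary $\omega'>\omega$ and prove $\rk k[G]=O\parens{(\sharp G)^{\frac{\omega'}{2}}}$. By the very definition of the exponent of matrix multiplication, the strict inequality $\omega'>\omega$ guarantees a constant $c=c(\omega')$ with $\rk k^{n\times n}\le c\,n^{\omega'}$ for all $n\ge1$.

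Combining these gives $\rk k[G]\le c\sum_{\tau=1}^t n_\tau^{\omega'}=c\sum_{\tau=1}^t\parens{n_\tau^2}^{\frac{\omega'}{2}}$. Now I would apply Proposition~\ref{lem:sumupp} to the $t$ positive quantities $n_\tau^2$ with exponent $\alpha=\frac{\omega'}{2}$; this is legitimate precisely because $\omega'>\omega\ge2$ forces $\alpha>1$. The proposition yields $\sum_{\tau=1}^t\parens{n_\tau^2}^{\frac{\omega'}{2}}\le\parens{\sum_{\tau=1}^t n_\tau^2}^{\frac{\omega'}{2}}=(\sharp G)^{\frac{\omega'}{2}}$, whence $\rk k[G]\le c\,(\sharp G)^{\frac{\omega'}{2}}$. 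Since $(\sharp G)^{\frac{\omega'}{2}}=\parens{(\sharp G)^{\frac{\omega}{2}}}^{\delta}$, this is exactly the bound required for the chosen $\delta$; as $\delta>1$ was arbitrary, this is the assertion $\rk k[G]\les(\sharp G)^{\frac{\omega}{2}}$.

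There is no deep obstacle here; the one point demanding care is the logical structure of the $\les$ quantifier. One must convert ``for every $\delta>1$'' into a single fixed $\omega'>\omega$ \emph{before} invoking the definition of $\omega$, since no constant works uniformly at the boundary value $\tau=\omega$ itself. The constant $c(\omega')$ blows up as $\omega'\downarrow\omega$, which is precisely the reason the conclusion is phrased with $\les$ rather than a clean $O\parens{(\sharp G)^{\frac{\omega}{2}}}$.
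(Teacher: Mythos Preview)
Your argument is correct and is essentially the same as the paper's proof: Wedderburn decomposition~\eqref{eq:dec1}, subadditivity of rank under direct products, the definition of $\omega$, and Proposition~\ref{lem:sumupp} applied to the squares $n_\tau^2$ with exponent $\frac{\omega'}{2}\ge 1$. The only difference is cosmetic: the paper writes the chain of inequalities directly with $\omega$ and the $\les$ symbol, whereas you explicitly unfold the $\les$ quantifier by fixing $\omega'>\omega$ before invoking the matrix-multiplication bound---arguably the cleaner way to handle the logical dependence of the constant on $\delta$.
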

\begin{proof}
Consider decomposition~\eqref{eq:dec1} of $k[G]$ into a direct product of matrix algebras. It follows that
$$
\rk k[G]\le\sum_{\tau=1}^t\rk k^{n_\tau\times n_\tau}.
$$
By definition of the exponent of matrix multiplication
$$
\rk k^{n_\tau\times n_\tau}\le L(k^{n_\tau\times n_\tau})\les n_\tau^\omega.
$$
Thus by Proposition~\ref{lem:sumupp}
$$
\rk k[G]\les \sum_{\tau=1}^t n_\tau^\omega=\sum_{\tau=1}^t (n_\tau^2)^{\frac{\omega}{2}}\le \parens{\sum_{\tau=1}^t n_\tau^2}^{\frac{\omega}{2}}=(\sharp G)^{\frac{\omega}{2}}
$$
which completes the proof.
\qed
\end{proof}

\begin{lemma}\label{lem:upp}
Let $\GG=\{G_1,\,G_2,\,\dotsc\}$ be a family of finite groups and $k$ be an algebraically closed field of characteristic either $0$ or coprime with each $\sharp G_i$. Let $f(N)$ be a function which satisfies following property: for every $G\in\GG$ all character degrees of $G$ over $k$ are less or equal than $f(\sharp G)$. Then for any $G\in\GG$
\begin{equation}\label{eq:upp}
\rk k[G]\les \sharp G\cdot\min_{h(N)}\parens{h(\sharp G)^\omega+\frac{f(\sharp G)^\omega}{h(\sharp G)^2}},
\end{equation}
where $\omega$ is the exponent of matrix multiplication and the minimum is taken over all functions $h(N)$ such that at least one irreducible character degree of $G$ is less or equal than $h(\sharp G)$.
\end{lemma}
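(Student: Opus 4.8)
The plan is to start from the Wedderburn decomposition~\eqref{eq:dec1} and split the matrix factors of $k[G]$ into a \emph{small} and a \emph{large} family according to the free threshold $h=h(\sharp G)$, estimating the two families separately. Writing $N=\sharp G$ and $f=f(N)$, decomposition~\eqref{eq:dec1} gives $k[G]\cong k^{n_1\times n_1}\times\dotsm\times k^{n_t\times n_t}$ with $\sum_{\tau}n_\tau^2=N$ and, by hypothesis, $n_\tau\le f$ for every $\tau$. The first step is the same soft estimate used for Theorem~\ref{thm:upgen}: by subadditivity of the rank and the definition of $\omega$,
$$
\rk k[G]\le\sum_{\tau=1}^t\rk k^{n_\tau\times n_\tau}\les\sum_{\tau=1}^t n_\tau^\omega .
$$
After this single use of $\les$, all that remains is an ordinary counting estimate for $\sum_\tau n_\tau^\omega$.

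Next I would fix an admissible threshold $h$ --- admissibility forces $h\ge 1$, since the trivial representation always contributes a character degree equal to $1$ --- and partition the indices into $S=\braces{\tau:n_\tau\le h}$ and $L=\braces{\tau:n_\tau>h}$. For the small factors I bound each summand by $h^\omega$ and count the factors crudely by $t\le\sum_\tau n_\tau^2=N$, so that $\sum_{\tau\in S}n_\tau^\omega\le\abs{S}h^\omega\le Nh^\omega$. For the large factors I use that each of them contributes more than $h^2$ to $\sum_\tau n_\tau^2=N$, whence $\abs{L}<N/h^2$; combined with $n_\tau\le f$, i.e. $n_\tau^\omega\le f^\omega$, this yields $\sum_{\tau\in L}n_\tau^\omega<(N/h^2)f^\omega$. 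Adding the two contributions gives $\sum_\tau n_\tau^\omega\le N\parens{h^\omega+f^\omega/h^2}$, and feeding this back through the soft estimate gives $\rk k[G]\les N\parens{h^\omega+f^\omega/h^2}$. Since $h$ was an arbitrary admissible threshold, taking the minimum over all admissible $h$ produces exactly~\eqref{eq:upp}.

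The only genuinely delicate point is the bookkeeping for $\les$: it must absorb both the subpolynomial overhead hidden in $\rk k^{n\times n}\les n^\omega$ and the at most $N$ summands, measured against the right-hand side $N\parens{h^\omega+f^\omega/h^2}$, which itself grows at least like $N$. I would handle this exactly as in Theorem~\ref{thm:upgen}, invoking $\les$ only once to pass from the exact rank to $\sum_\tau n_\tau^\omega$ and keeping every subsequent manipulation an honest inequality, so that no summation of soft bounds is ever required. It is also worth remarking that the crude count $t\le N$ is precisely what produces the factor $h^\omega$ rather than a sharper $h^{\omega-2}$; this is deliberate, because it is the $h^\omega$ term that makes the minimization over $h$ nontrivial, the optimum sitting near $h=f^{\omega/(\omega+2)}$ rather than at the trivial choice $h=f$.
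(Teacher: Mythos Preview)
Your argument is correct and is essentially the paper's own proof: the paper orders the character degrees as $n_1\ge\dotsb\ge n_t$, sets $j=j(\sharp G)$ to be the number of $n_\tau$ exceeding $h(\sharp G)$, bounds $j\le \sharp G/h(\sharp G)^2$ from $\sum_\tau n_\tau^2=\sharp G$, and then estimates $\rk k[G]\les j\,f(\sharp G)^\omega+\sum_{\tau>j}n_\tau^\omega\le \sharp G\,f(\sharp G)^\omega/h(\sharp G)^2+\sharp G\,h(\sharp G)^\omega$. Your sets $L$ and $S$ are exactly the first $j$ indices and the remaining ones, and your crude count $t\le N$ for the small part is precisely what the paper uses to get $\sharp G\,h(\sharp G)^\omega$ rather than the sharper $\sharp G\,h(\sharp G)^{\omega-2}$, as you already observed.
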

\begin{proof}
Let $n_1\ge\dotsb\ge n_t$ be the irreducible character degrees of $G$ over $k$. Let $h(N)$ be as defined. Let $j(N)$ be the number of $n_\tau$ greater than $h(N)$. Note that
$$
\sharp G=\sum_{\tau=1}^t n_\tau^2\ge j(\sharp G)h(\sharp G)^2,
$$
thus $j(N)\le\frac{N}{h(N)^2}$. It follows that
$$
\rk k[G]\les \parens{j(\sharp G)f(\sharp G)^\omega+\sum_{\tau=j(\sharp G)+1}^t n_\tau^\omega}\le \sharp G\frac{f(\sharp G)^\omega}{h(\sharp G)^2}+\sharp G h(\sharp G)^\omega.
$$
The last equation holds for any $h(N)$ so it holds also for the one minimizing the right side.
\qed
\end{proof}

\begin{theorem}\label{thm:upf}
Let $\GG=\{G_1,\,G_2,\,\dotsc\}$ be a family of finite groups and $k$ be an algebraically closed field of characteristic either $0$ or coprime with order of each $G_i$. Let $f(N)$ be a function which satisfies following property: for each $G\in\GG$ all character degrees of $G$ over $k$ are less or equal than $f(\sharp G)$. Then for any $G\in\GG$
\begin{equation}\label{eq:upbound}
\rk k[G]\les \sharp Gf(\sharp G)^{\omega-2+\frac{4}{\omega+2}}\le \sharp G f(\sharp G)^{\omega-1},
\end{equation}
where $\omega$ is the exponent of matrix multiplication.
\end{theorem}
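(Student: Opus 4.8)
The plan is to invoke Lemma~\ref{lem:upp} and then choose the free function $h(N)$ so as to balance the two competing terms in its bracket. By that lemma,
$$
\rk k[G]\les \sharp G\cdot\min_{h(N)}\parens{h(\sharp G)^\omega+\frac{f(\sharp G)^\omega}{h(\sharp G)^2}},
$$
so it suffices to evaluate the right-hand side at one admissible choice of $h$, since the minimum is at most that value, and then read off the claimed exponent from an elementary identity.

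The two summands $h^\omega$ and $f^\omega h^{-2}$ move in opposite directions as $h$ grows, so the natural choice is the one that equalizes them: solving $h^\omega=f^\omega h^{-2}$, i.e. $h^{\omega+2}=f^\omega$, I would set $h(N)=f(N)^{\omega/(\omega+2)}$. This is optimal up to a constant factor (the exact calculus minimizer carries an extra $(2/\omega)^{1/(\omega+2)}$), but the relation $\les$ absorbs constant factors, so nothing is lost by using the cleaner balancing choice. Before substituting I would verify that this $h$ is admissible for the minimum in Lemma~\ref{lem:upp}, which requires some irreducible character degree of $G$ to be at most $h(\sharp G)$. This is immediate: every group has the trivial one-dimensional representation, so the smallest character degree equals $1$, while $h(\sharp G)=f(\sharp G)^{\omega/(\omega+2)}\ge 1$ because character degrees are positive integers and hence $f\ge 1$.

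Substituting $h=f^{\omega/(\omega+2)}$ makes both summands equal:
$$
h(\sharp G)^\omega=f(\sharp G)^{\frac{\omega^2}{\omega+2}},\qquad \frac{f(\sharp G)^\omega}{h(\sharp G)^2}=f(\sharp G)^{\omega-\frac{2\omega}{\omega+2}}=f(\sharp G)^{\frac{\omega^2}{\omega+2}},
$$
so the bracket is $\les f(\sharp G)^{\omega^2/(\omega+2)}$ and therefore $\rk k[G]\les \sharp G\, f(\sharp G)^{\omega^2/(\omega+2)}$. It then remains only to rewrite the exponent via
$$
\frac{\omega^2}{\omega+2}=\frac{(\omega-2)(\omega+2)+4}{\omega+2}=\omega-2+\frac{4}{\omega+2},
$$
which yields the first inequality in~\eqref{eq:upbound}. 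For the second, $\frac{\omega^2}{\omega+2}\le\omega-1$ is equivalent to $\omega^2\le\omega^2+\omega-2$, i.e. to $\omega\ge 2$, and since the exponent of matrix multiplication always satisfies $\omega\ge 2$ the weaker bound $\sharp G\, f(\sharp G)^{\omega-1}$ follows.

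I expect the whole argument to be routine once Lemma~\ref{lem:upp} is in hand. The genuine conceptual obstacle — controlling the contribution of the few large representations against that of the many small ones — has already been dispatched inside that lemma, leaving here only a one-parameter optimization. The only points demanding care are the admissibility check for the balancing choice of $h$ (guaranteed by the trivial representation) and the bookkeeping of the $\les$ relation, so that the constant factors produced by the optimization may be safely discarded.
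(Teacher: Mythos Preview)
Your proof is correct and follows the same approach as the paper: invoke Lemma~\ref{lem:upp}, use the trivial representation to ensure admissibility, and substitute $h(N)=f(N)^{\omega/(\omega+2)}$ (the paper writes this exponent as $1-\tfrac{2}{\omega+2}$, which is the same thing). Your version is simply more explicit about the arithmetic and about the second inequality $\tfrac{\omega^2}{\omega+2}\le\omega-1$ via $\omega\ge 2$.
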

\begin{proof}
It is a well-known fact that every group has at least one (trivial) one-dimensional representation. So we can choose for $h(N)$ in Lemma~\ref{lem:upp} any function which is less than $f(N)$. The result of the theorem follows by choosing $h(N)=f(N)^{1-\frac{2}{\omega+2}}$.
\qed
\end{proof}

\begin{corollary}
\begin{enumerate}
\item If $f(N)=O(1)$ then $\rk k[G_i]=O(N)$.
\item If for any $\eps>0$ $f(N)=o(N^\eps)$ then $\omega_{k[G]}=1$.
\end{enumerate}
\end{corollary}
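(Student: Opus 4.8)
The plan is to treat the two parts separately, since part~1 asks for a sharp linear bound that the $\les$ in Theorem~\ref{thm:upf} is too coarse to deliver, while part~2 follows by unwinding that theorem together with the definition of $\omega_{k[G]}$.

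For part~1 I would deliberately \emph{not} route through Theorem~\ref{thm:upf}: the relation $\rk k[G]\les \sharp G f(\sharp G)^{\omega-1}$ only yields $\rk k[G]\le O(N^\delta)$ for every $\delta>1$, which is strictly weaker than the claimed $O(N)$. Instead I would argue directly from the decomposition~\eqref{eq:dec1}. If $f(N)=O(1)$, then every character degree satisfies $n_\tau\le c$ for a fixed constant $c$, so each factor $k^{n_\tau\times n_\tau}$ has rank bounded by a constant $C=C(c)$ (e.g.\ the trivial bound $(n_\tau^2)^2\le c^4$). Subadditivity of rank over direct products gives $\rk k[G]\le\sum_{\tau=1}^t\rk k^{n_\tau\times n_\tau}\le tC$. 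Finally, since $\sharp G=\sum_{\tau=1}^t n_\tau^2\ge t$, we have $t\le N$, whence $\rk k[G]\le CN=O(N)$.

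For part~2 I would start from Theorem~\ref{thm:upf}, which gives $\rk k[G]\les \sharp G f(\sharp G)^{\omega-1}$, and recall that $\omega_{k[G]}\ge 1$ always holds because multiplication is faithful; so it suffices to prove $\omega_{k[G]}\le 1$, i.e.\ that for every $\tau>1$ one has $\rk k[G_n]=O(N^\tau)$ with $N=\sharp G_n$. Writing $g(N)=N f(N)^{\omega-1}$, the hypothesis $f(N)=o(N^\eps)$ for all $\eps>0$ first lets me absorb the exponent $\omega-1$: choosing the inner exponent as $\eps'/(\omega-1)$ gives $f(N)^{\omega-1}=o(N^{\eps'})$ for every $\eps'>0$, hence $g(N)=o(N^{1+\eps'})$, so $g(N)\le N^{1+\eps'}$ for all large $N$. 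Given $\tau>1$, I would then set $\delta=\sqrt{\tau}>1$ and $\eps'=\sqrt{\tau}-1>0$; unwinding $\les$ yields $\rk k[G]\le O(g(N)^\delta)\le O(N^{(1+\eps')\delta})=O(N^\tau)$. As $\tau>1$ is arbitrary this gives $\omega_{k[G]}\le 1$, and combined with $\omega_{k[G]}\ge 1$ we conclude $\omega_{k[G]}=1$.

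The routine-looking but genuinely delicate step is the bookkeeping in part~2: both the $\les$ relation and the hypothesis $f=o(N^\eps)$ carry their own quantifiers over small exponents, and these must be nested correctly so that the product $(1+\eps')\delta$ can be pushed below an arbitrary $\tau>1$ while keeping both $\delta>1$ and $\eps'>0$. The choice $\delta=\sqrt{\tau}$, $\eps'=\sqrt{\tau}-1$ is exactly what makes $(1+\eps')\delta=\tau$, and I would verify that the degenerate case $\omega=1$ is subsumed (there $f^{\omega-1}\equiv 1$ and the estimate reduces immediately to $\rk k[G]\les N$). I expect no difficulty in part~1 beyond the observation that one must bypass Theorem~\ref{thm:upf} to preserve the sharp linear constant.
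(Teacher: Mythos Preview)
Your proof is correct. The paper states this corollary without proof, placing it immediately after Theorem~\ref{thm:upf} as though both parts are automatic; you have actually supplied the missing details, and in part~1 you have noticed something the paper glosses over: since $\les$ only guarantees $\rk k[G]\le O(N^\delta)$ for every $\delta>1$, Theorem~\ref{thm:upf} alone does not yield the sharp $O(N)$ claimed, and a direct count via the decomposition~\eqref{eq:dec1} with bounded matrix sizes is indeed the right route. Your part~2 bookkeeping (splitting the target exponent as $\tau=(1+\eps')\delta$ with $\delta=\sqrt{\tau}$, $\eps'=\sqrt{\tau}-1$) is clean and correct.

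One harmless remark: the ``degenerate case $\omega=1$'' you mention cannot occur, since the matrix multiplication exponent always satisfies $\omega\ge 2$; your division by $\omega-1$ is therefore safe without any special case.
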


\begin{remark}
\begin{enumerate}
\item Note, that $h(N)=\parens{\frac{2}{\omega}}^{\frac{1}{\omega+2}}f(N)^{1-\frac{\omega}{\omega+2}}$ minimizes the right side of~\eqref{eq:upp}.
\item The upper bound given by~\eqref{eq:upbound} is better than the one given by~\eqref{eq:upgen} if $f(N)=o\parens{N^{\frac{1}{2}-\frac{2}{\omega^2}}}$. According to the best known upper bound $\omega<2.376$ \cite{Cowi}, currently \eqref{eq:upbound} beats \eqref{eq:upgen} if $f(N)=o(N^{0.1457})$.
\end{enumerate}
\end{remark}

Let $k$ now be an arbitrary field of characteristic $0$ and $G$ be a finite group. By definition of prime field, $\QQ\subseteq k$ is the prime subfield of $k$. Let $K\supseteq k$ be an algebraically closed extension of $k$. It is known (see~\cite[Theorem~11.4, Chapter~XVIII]{Lang}) that every representation of $G$ over $K$ is definable over $\QQ(\zeta_m)$ where $m$ is exponent of $G$ and $\zeta_m$ is a primitive $m$-th root of unity. Therefore, it is definable over $k(\zeta_m)$ (if $k$ does not already contain $\zeta_m$). Now consider any irreducible representation of $G$ over $k$. It is a simple $k[G]$-module by Maschke's Theorem \cite[Theorem~1.2, Chapter~XVIII]{Lang}. Therefore, it is isomorphic to $D^{n\times n}$ where $D$ is a $k$-division algebra. $\zeta_m$ is algebraic over $D$ since it is algebraic over $k\subseteq D$ and $D\cong D'\subseteq k(\zeta_m)$. The latter holds since there are no simple irreducible representations of $G$ over $k(\zeta_m)$ other than those isomorphic to matrix algebras over $k(\zeta_m)$.

Thus, $D$ is a subalgebra of $k(\zeta_m)$, or $D\cong k(\zeta_\ell)$ for some $\ell\mid m$.

\begin{theorem}\label{thm:char0}
Let $\GG=\{G_1,\,G_2,\,\dotsc\}$ be a family of finite groups and $k$ be an arbitrary field of characteristic $0$. Then for any $G\in\GG$
$$
\rk k[G]\les (\sharp G)^{\frac{\omega}{2}},
$$
where $\omega$ is again the exponent of matrix multiplication.
\end{theorem}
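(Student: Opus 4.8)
The plan is to follow the proof of Theorem~\ref{thm:upgen} almost verbatim, but starting from the decomposition~\eqref{eq:dec2} into matrix algebras $D_\tau^{n_\tau\times n_\tau}$ over division algebras rather than from~\eqref{eq:dec1}. The structural fact established in the paragraph preceding the statement is what makes this work: since $\chr k=0$, each $D_\tau$ is a \emph{commutative} field extension $D_\tau\cong k(\zeta_{\ell_\tau})$ of $k$, of dimension $d_\tau=[D_\tau:k]$, and by~\eqref{eq:dec2} we have $\sum_{\tau=1}^t n_\tau^2 d_\tau=\sharp G$.

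First I would bound the rank of a single factor. Using $D_\tau^{n_\tau\times n_\tau}\cong D_\tau\otimes_k k^{n_\tau\times n_\tau}$ together with the standard submultiplicativity of bilinear complexity under tensor products, $\rk(A\otimes_k B)\le\rk A\cdot\rk B$ (a matrix-multiplication algorithm over $k$ applied entrywise, with each entry product in $D_\tau$ realised by a $k$-bilinear algorithm for $D_\tau$), I obtain
$$
\rk D_\tau^{n_\tau\times n_\tau}\le\rk D_\tau\cdot\rk k^{n_\tau\times n_\tau}\les\rk D_\tau\cdot n_\tau^{\omega},
$$
the last step being the definition of $\omega$.

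The key step is to bound $\rk D_\tau$. Since $D_\tau\cong k[X]/(p_\tau(X))$ for an irreducible $p_\tau$ of degree $d_\tau$ and $k$ is infinite (being of characteristic $0$), multiplication of two polynomials of degree below $d_\tau$ followed by reduction modulo $p_\tau$ is computable by evaluation at $2d_\tau-1$ points and interpolation, i.e.\ a product composed with linear maps; hence $\rk D_\tau\le 2d_\tau-1<2d_\tau$, a bound \emph{linear} in the dimension. This is precisely where characteristic $0$ is indispensable: the linear estimate relies on each $D_\tau$ being an honest commutative field, which would fail for general noncommutative division algebras. I expect this to be the only genuinely new ingredient beyond Theorem~\ref{thm:upgen}.

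Finally I would assemble the estimate. Summing over factors gives
$$
\rk k[G]\le\sum_{\tau=1}^t\rk D_\tau^{n_\tau\times n_\tau}\les\sum_{\tau=1}^t d_\tau n_\tau^{\omega}.
$$
Writing $b_\tau=d_\tau n_\tau^2$ and using $\omega\ge 2$ and $d_\tau\ge 1$, I would rewrite $d_\tau n_\tau^{\omega}=d_\tau^{1-\omega/2}b_\tau^{\omega/2}\le b_\tau^{\omega/2}$, and then apply Proposition~\ref{lem:sumupp} with exponent $\omega/2\ge 1$ to conclude $\sum_{\tau}b_\tau^{\omega/2}\le\parens{\sum_{\tau}b_\tau}^{\omega/2}=(\sharp G)^{\omega/2}$. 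Thus $\rk k[G]\les(\sharp G)^{\omega/2}$. The extra weights $d_\tau$ are absorbed without harm because $d_\tau^{1-\omega/2}\le 1$, so the bookkeeping with $\les$ over the $t$ summands behaves exactly as in the algebraically closed case.
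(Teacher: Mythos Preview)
Your proposal is correct and follows essentially the same route as the paper: decomposition~\eqref{eq:dec2}, the linear bound $\rk D_\tau\le 2d_\tau-1$ from the fact that each $D_\tau$ is a simple field extension of the infinite field $k$, the tensor-product bound $\rk D_\tau^{n_\tau\times n_\tau}\les d_\tau n_\tau^{\omega}$, and then Proposition~\ref{lem:sumupp} combined with $\omega\ge 2$ to absorb the weights $d_\tau$. The only cosmetic difference is that the paper rewrites $d_\tau n_\tau^{\omega}=(n_\tau^2 d_\tau^{2/\omega})^{\omega/2}$ and uses $d_\tau^{2/\omega}\le d_\tau$ after applying convexity, whereas you use $d_\tau^{1-\omega/2}\le 1$ before; these are equivalent.
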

\begin{proof}
Since $k[G]$ is semisimple, \eqref{eq:dec2} holds. As mentioned above, $D_\tau$ is actually an extension field of $k$, thus for all $\tau$ $\rk D_\tau\le 2d_\tau-1$ since it can be implemented via polynomial multiplication over $k$ and $k$ is infinite. We have
\begin{multline*}
\rk k[G]\les \sum_{\tau=1}^t n_\tau^\omega (2d_\tau-1)<2\sum_{\tau=1}^t n_\tau^\omega d_\tau
=2\sum_{\tau=1}^t \parens{n_\tau^2 d_\tau^{\frac{2}{\omega}}}^{\frac{\omega}{2}}\\
\le 2\parens{\sum_{\tau=1}^t n_\tau^2 d_\tau^{\frac{2}{\omega}}}^{\frac{\omega}{2}}
\le 2\parens{\sum_{\tau=1}^t n_\tau^2 d_\tau}^{\frac{\omega}{2}}=2(\sharp G)^{\frac{\omega}{2}}
\end{multline*}
since $\omega\ge 2$.
\qed
\end{proof}

\begin{remark}
Statement of theorem~\ref{thm:char0} remains true whenever the division algebras appear inside simple irreducible representations of groups have linear rank. Thus,
\begin{enumerate}
\item
Theorem~\ref{thm:char0} holds also when $k$ is finite. It is known that any finite division algebra is an extension field of $k$, by Wedderburn's Little Theorem \cite[Theorem 2.55]{Lidl}, therefore its rank is linear due to Chudnovskys' algorithm, cf. \cite{Chud} or \cite{Shpr}.
\item
It also holds for real closed fields since all division algebras over such fields have bounded dimension (in fact, it can be only 1, 2, 4, or 8)~\cite{Darp}.
\end{enumerate}
\end{remark}

\section{Conclusion}\label{sec:conclusion}

Noncommutative group algebras appear to be closely connected with the matrix algebra. Studying the problem of complexity of multiplication  in group algebras may give us new algebraic insight into this classical problem of computer algebra and algebraic complexity theory. There are numerous open problems related to group algebras. We mention here only some of them.

\begin{enumerate}
\item It could be possible to obtain a general upper bound not depending on the matrix representations for the rank of group algebras based on the group structure that will be better than upper bounds given by Theorems~\ref{thm:upgen},~\ref{thm:upf},~and~\ref{thm:char0}. In this case it could improve the upper bound for matrix multiplication.
\item We would like to extend Theorem~\ref{thm:char0} for fields of arbitrary characteristic that does not divide any of the group orders from the family under consideration.
\item The radical of a group algebra in the modular case is tightly related to Sylow $p$-groups. These groups are well-studied, although their structure may vary very strongly. It is known that the rank of commutative group algebras with nontrivial radical is still linear, so it does not affect the order of the complexity. On the other hand, a commutative group algebra over algebraically closed field of characteristic $p$ is of minimal rank iff its Sylow $p$-group is cyclic. An open question is if similar effects also hold for noncommutative group algebras.
\end{enumerate}

\section*{Acknowledgements}

I would like to thank M.~Bl\"aser for a lot of helpful comments and suggestions and V.~Alekseyev for introducing me into this topic.
This research is supported by Cluster of Excellence ``Multimodal Computing and Interaction'' at Saarland University.


\begin{thebibliography}{99}
\bibitem{Alst}{A.~Alder, V.~Strassen. On the algorithmic complexity of associative algebras. Theoret. Comput. Sci. 15, 1981, pp.~201--211.}
\bibitem{Atkn}{M.~D.~Atkinson. The complexity of group algebra computations. Theoret. Comput. Sci. 5, 1977, pp.~205--209.}
\bibitem{Bl00}{M.~Bl\"aser. Lower bounds for the bilinear complexity of associative algebras. Comput. Complex. 9, 2000, pp.~73--112.}
\bibitem{Bl03}{M.~Bl\"aser. On the complexity of the multiplication of matrices of small formats. J.~Complexity, 19, 2003, pp.~43--60.}
\bibitem{Bl04}{M.~Bl\"aser. A complete characterization of the algebras of minimal bilinear complexity. SIAM J. Comput., Vol. 34, No. 2, 2004, pp.~277--298.}
\bibitem{Burg}{P.~B\"urgisser, M.~Clausen, and A.~ Shokrollahi. Algebraic Complexity Theory. Springer, Berlin, 1997.}
\bibitem{Chok}{B.~Chokayev. On complexity of multiplication in group algebras. Diploma thesis, Moscow State University, Faculty of Computational Mathematics and Cybernetics, 2009 (in Russian).}
\bibitem{Chud}{D.~V.~Chudnovsky and G.~V.~Chudnovsky. Algebraic complexities and algebraic curves over finite fields. J. Complexity, 4, 1988, pp.~285--316.}
\bibitem{Coum}{H.~Cohn and C.~Umans. A Group-theoretic Approach to Fast Matrix Multiplication. Proceedings of the 44th Annual IEEE Symposium on Foundations of Computer Science (FOCS), 2003, pp.~438--449.}
\bibitem{Cksu}{H.~Cohn, R.~Kleinberg, B.~Szegedy, and C.~Umans. Group-theoretic Algorithms for Matrix Multiplication. Proceedings of the 46th Annual IEEE Symposium on Foundations of Computer Science (FOCS), 2005. pp.~379--388.}
\bibitem{Cowi}{D.~Coppersmith and S.~Winograd. Matrix multiplication via arithmetic progressions. J.~Symb.~Comp. 9, 1990, pp.~251--280.}
\bibitem{Darp}{E.~Darpo, E.~Dieterich, M.~Herschend. In which dimensions does a division algebra over a given ground field exist? Enseignement Mathematique, Vol.~51, Part~3/4, 2005, pp.~255--264.}
\bibitem{Gelf}{S.~I.~Gelfand. Representations of the full linear group over a finite field. Math. USSR-Sb., 12, No.~1, 1970, pp.~13--39.}
\bibitem{Haru}{G.~H.~Hardy, S.~Ramanujan. Asymptotic Formulae in Combinatory Analysis. Proc.~London~Math.~Soc.~17, 1918, pp.~75--115.}
\bibitem{Jaja}{J.~J\'aJ\'a. On the Complexity of Bilinear Forms with Commutativity. SIAM~J.~Comput., Vol.~9, No.~4, 1980, pp.~713--728.}
\bibitem{Jake}{G.~James, A.~Kerber. The representation theory of the symmetric group, Encyclopedia of Mathematics and its Applications, 16, Addison-Wesley Publishing Co., 1981.}
\bibitem{Jali}{G.~James, M.~Liebeck. Representations and Characters of Groups. Cambridge University Press, Cambridge, second edition, 2001.}
\bibitem{Lang}{S.~Lang. Algebra. Revised Third Edition. Springer, 2005.}
\bibitem{Lidl}{R.~Lidl and H.~Niederreiter. Finite Fields. Encyclopedia of Mathematics and its Applications, Volume 20. Cambridge University Press, 2008.}
\bibitem{Linn}{Y.~V.~Linnik. On the least prime in an arithmetic progression, I.~The basic theorem, II.~The Deuring-Heilbronn's phenomenon. Rec.~Math. (Mat.~Sbornik), 15, 1944, pp.~139--178 and 347--368.}
\bibitem{Lupa}{O.~B.~Lupanov, A method of circuit synthesis. Izvesitya VUZ, Radiofiz Vol.~1, 1958, pp.~120-–140 (in Russian).}
\bibitem{Po05}{A.~Pospelov. Bilinear complexity of commutative group algebras. Selected Diploma Theses 2005, Moscow State University, Faculty of Computational Mathematics and Cybernetics, p.~125 (in Russian).}
\bibitem{Po08}{A.~Pospelov. On complexity of multiplication of polynomials and matrices. Young Reseacher Series, Moscow State University, Faculty of Computational Mathematics and Cybernetics, 2008, pp.~83--97 (in Russian).}
\bibitem{Schn}{A.~Sch\"onhage. Partial and total matrix multiplication. SIAM J.~Comput., v.~10, No.~3, 1981, pp.~434-455.}
\bibitem{Shpr}{I.~E.~Shparlinski, M.~A.~Tsfasman, and S.~G.~Vladut. Curves with many points and multiplication in finite fields. Coding Theory and Algebraic Geometry, Lecture Notes in Math. 1518, H.~Stichtenoth and M.~A.~Tsfasman, eds. Springer, Berlin, 1992, pp.~145--169.}
\bibitem{Ward}{B.~L.~van~der~Waerden. Algebra II. Springer, 5th edition, 1967.}
\bibitem{Wien}{S.~H.~Wientraub. Representation Theory of Finite Groups: Algebra and Arithmetic. American Mathematical Society, 2003.}
\end{thebibliography}
\end{document}